\newcommand{\R}{\mathbb{R}}
\newcommand{\Set}{\mathbf{Set}}
\newcommand{\Merge}{\mathbf{Merge}}
\newcommand{\Forest}{\mathbf{Forest}}
\newcommand{\Top}{\mathbf{Top}}
\newcommand{\Vect}{\mathbf{Vect}}
\newcommand{\PP}{I}
\newcommand{\sd}{\hat{d}_I}
\newcommand{\WW}{\mathscr{W}}
\newcommand{\norm}[1]{\left\lVert#1\right\rVert}
\DeclareMathOperator{\B}{\mathcal{B}}
\DeclareMathOperator{\Mid}{mid}
\DeclareMathOperator{\Cells}{Cells}
\DeclareMathOperator{\id}{id}
\DeclareMathOperator{\LCA}{LCA}
\DeclareMathOperator{\costtext}{cost}
\newcommand{\cost}[1]{\costtext(#1)}
\newcommand{\define}[1]{\textbf{\boldmath{#1}}}
\colorlet{circle area}{pink!80}
\tikzset{
    filled/.style={fill=circle area, draw=circle edge, thick},
    outline/.style={draw=circle edge, thick},
    pinkcircle/.style={pink!80,fill opacity=0.5,draw=red!50 },
    emptycircle/.style={white,fill opacity=0.5,draw=red!50 }
}
\title{The Universal $\ell^p$-Metric on Merge Trees} 
\author{Robert Cardona}{University at Albany, State University of New York (SUNY) }{rlcardona@albany.edu}{}{}
\author{Justin Curry\footnote{Corresponding Author}}{University at Albany, State University of New York (SUNY) }{jmcurry@albany.edu}{https://orcid.org/
0000-0003-2504-8388}{Supported by NSF CCF-1850052 and NASA 80GRC020C0016}
\author{Tung Lam}{University at Albany, State University of New York (SUNY) }{tlam@albany.edu}{}{}
\author{Michael Lesnick}{University at Albany, State University of New York (SUNY) }{mlesnick@albany.edu}{0000-0003-1924-3283}{}
\authorrunning{R. Cardona, J. Curry, T. Lam, and M. Lesnick} 
\keywords{merge trees, hierarchical clustering, persistent homology, Wasserstein distances, interleavings} 
\begin{document}

\maketitle

\begin{abstract}
Adapting a definition given by Bjerkevik and Lesnick \cite{LB2021} for multiparameter persistence modules, we introduce an $\ell^p$-type extension of the interleaving distance on merge trees.
We show that our distance is a metric, and that it upper-bounds the $p$-Wasserstein distance between the associated barcodes.
For each $p\in[1,\infty]$, we prove that this distance is stable with respect to cellular sublevel filtrations and that it is the universal (i.e., largest) distance satisfying this stability property.
In the $p=\infty$ case, this gives a novel proof of universality for the interleaving distance on merge trees.
\end{abstract}

\section{Introduction}

\subsection{Overview}

A \emph{merge tree}, also known as a barrier tree~\cite{flamm2002barrier} or a join tree \cite{carr2003computing}, encodes the connectivity of the sublevel sets of a function $f:X\to \R$ in terms of a graph $M_f$ equipped with a map $\pi:M_f \to \R$; see \Cref{fig:simple-example}.  
Merge trees are readily computed in practice, and have found applications in topography \cite{KWEON1994171, GC1987}, chemistry \cite{flamm2002barrier}, visualization \cite{carr2004simplifying,pont2021wasserstein}, cluster analysis \cite{hartigan81, JMLR-carlsson-memoli-10,chen2019generalized} and stochastic processes \cite{le2002random,perez2020c0persistent-a,perez2020persistent-b}.
As a fundamental topological descriptor, merge trees also play a central role in topological data analysis (TDA).

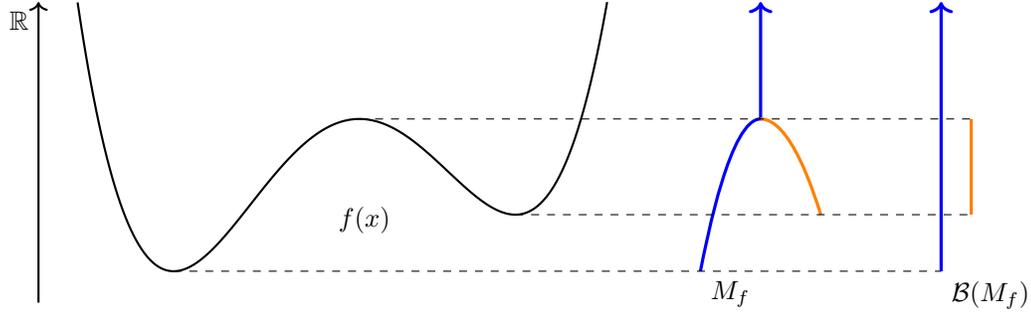
\begin{figure}
    \centering
\begin{tikzpicture}[scale=0.8]
    \draw [thick, domain=-2.35:6.45, samples=200] plot (\x, {pow(\x,4)/32 - 13*pow(\x, 3)/48 + 3*pow(\x,2)/8  + 13*\x/12});
    \node[anchor=south east] at (3, 0) {$f(x)$};
    \coordinate (A) at (0, -0.477);
    \coordinate (B) at (0, 0.463);
    \coordinate (C) at (0, 2.055);
    \coordinate (E) at (0, 4);
    \draw [blue, very thick, ->] ($(A) + (8, 0)$) node[black, anchor=north west] {$M_f$} parabola[bend at end] ($(C) + (9, 0)$) -- ($(E) + (9, 0)$);
    \draw [orange, very thick] ($(B) + (10, 0)$) parabola[bend at end] ($(C) + (9, 0)$);

    \draw [dashed] (-0.755, -0.477) -- ($(A) + (12, 0)$);
    \draw [dashed] (4.924, 0.463) -- ($(B) + (12.5, 0) $);
    \draw [dashed] (2.331, 2.055) -- ($(C) + (12.5, 0)$);

    \draw [blue, very thick, ->] ($(A) + (12,0)$) node[black, anchor=north west] {$\mathcal{B}(M_f)$} -- ($(E) + (12, 0)$) ;
    \draw [orange, very thick] ($(B) + (12.5,0)$) -- ($(C) + (12.5, 0)$);

    \draw [thick, ->] (-3, -1) -- (-3, 4) node[anchor = north east] {$\mathbb R$};
\end{tikzpicture}
\captionsetup{justification=centering}
\caption{\small Graph of function $f: \R \to \R$, its associated merge tree $M_f$ \\ and the {barcode $\mathcal B(M_f)$} obtained from $M_f$ via branch decomposition.}\label{fig:simple-example}
\end{figure}

Merge trees are closely related to \emph{persistent homology}, the most widely studied and applied TDA method. Persistent homology provides invariants of data called \emph{barcodes}; a barcode is simply a collection of intervals in $\R$. Each merge tree $M$ has an associated barcode $\B(M)$, which is obtained via a branch decomposition known as the \emph{elder rule}. This barcode is in fact the same as the \emph{sublevel set} persistence barcode in homological degree 0 considered in TDA~\cite{curry2018fiber}.

The question of how to metrize the collection of merge trees is a fundamental one: Metrics are needed to study the continuity and stability of the merge tree construction, and to quantify sensitivity to noise.
Many metrics on merge trees have been proposed in prior work, as we discuss in detail below.

In particular, metrics called \emph{interleaving distances}, which generalize the well-known \emph{Hausdorff distance} on subsets of a metric space, play a major role in TDA theory. Interleaving distances were first introduced by Chazal et al.~\cite{chazal2009proximity}; subsequently, the definition has been extended in several different directions \cite{bubenik2012:categorification,lesnick2015theory, bubenik2015metrics, scoccola2020:thesis, categorified_reeb, curry2014:thesis, kashiwara2018persistent, deSilva2017:catflow}.
Morozov et al.~observed that there is a natural definition of an interleaving distance for merge trees, denoted $d_I$, and used this to prove the following stability properties of merge trees and their barcodes \cite[Theorems 2 and 3]{interleaving-distance-merge-trees}:

\begin{theorem}[Stability properties of merge trees  \cite{interleaving-distance-merge-trees}]\label{thm:morozov-stability-review}\mbox{}
\begin{itemize}
    \item[(i)] For any functions $f,g:X \to \R$ where $X$ is a topological space, we have that
    \[d_I(M_f,N_g) \leq ||f-g||_{\infty}, \qquad \text{where} \qquad ||f-g||_{\infty}=\sup_{x\in X} |f(x)-g(x)|. \]
    \item[(ii)] For any merge trees $M$ and $N$, we have that \[d_B(\B(M),\B(N)) \leq d_I(M,N),\]
\end{itemize}
where $d_B$ denotes the bottleneck distance between barcodes; see \cref{def:wasserstein}.
\end{theorem}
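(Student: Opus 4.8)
The plan is to handle the two parts by different means: (i) by writing down an explicit interleaving, and (ii) by reducing to the algebraic stability theorem for persistence modules.

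For (i), I would work with the standard description of the merge tree of $f$ as the functor $M_f\colon(\R,\le)\to\Set$ sending $a$ to $\pi_0\bigl(f^{-1}(-\infty,a]\bigr)$, with transition maps induced by sublevel-set inclusions, and produce an $\epsilon$-interleaving of $M_f$ and $N_g$ for $\epsilon=\|f-g\|_\infty$. The first observation is that $\|f-g\|_\infty\le\epsilon$ forces the implication $f(x)\le a\Rightarrow g(x)\le a+\epsilon$ for all $x$, i.e.\ $f^{-1}(-\infty,a]\subseteq g^{-1}(-\infty,a+\epsilon]$, together with the symmetric inclusion obtained by swapping $f$ and $g$. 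Applying the functor $\pi_0$ to these two families of inclusions yields natural transformations $\phi\colon M_f\Rightarrow N_g(\cdot+\epsilon)$ and $\psi\colon N_g\Rightarrow M_f(\cdot+\epsilon)$, with naturality automatic since the inclusions commute with further sublevel inclusions. The remaining interleaving triangle identities hold for a cheap reason: the composite $f^{-1}(-\infty,a]\subseteq g^{-1}(-\infty,a+\epsilon]\subseteq f^{-1}(-\infty,a+2\epsilon]$ is literally the inclusion $f^{-1}(-\infty,a]\subseteq f^{-1}(-\infty,a+2\epsilon]$, so after applying $\pi_0$ it is exactly the transition map of $M_f$ from $a$ to $a+2\epsilon$; the other identity is symmetric. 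Taking the infimum over admissible $\epsilon$ then gives $d_I(M_f,N_g)\le\|f-g\|_\infty$. I expect no genuine obstacle here beyond ordinary point-set hygiene (ensuring the sublevel sets have well-behaved $\pi_0$).

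For (ii), the plan is to linearize and then cite stability. Applying the free-vector-space functor $k[-]\colon\Set\to\Vect$ levelwise converts a merge tree $M$ into a persistence module $V_M$, and since $k[-]$ is a functor it carries any $\epsilon$-interleaving of $M$ and $N$ to an $\epsilon$-interleaving of $V_M$ and $V_N$, so that $d_I(V_M,V_N)\le d_I(M,N)$. The crucial identification to establish is that the elder-rule barcode $\B(M)$ agrees with the ordinary persistence barcode of $V_M$ — in other words, that the elder rule computes degree-$0$ persistent homology. Granting this, (ii) follows from the algebraic stability theorem of Chazal et al.\ (in the form sharpened by Bauer--Lesnick): $d_B(\B(M),\B(N))=d_B(\B(V_M),\B(V_N))\le d_I(V_M,V_N)\le d_I(M,N)$. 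The main obstacle in (ii) is precisely this identification $\B(M)=\B(V_M)$: one must check that each merge of two components linearizes to the collapse of the younger generator onto the older one, so that the interval decomposition of $V_M$ reproduces exactly the birth/death pairs assigned by the elder rule. Once that bookkeeping is in place, everything else is formal functoriality together with the cited stability result.
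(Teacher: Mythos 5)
This theorem is not proved in the paper at all: it is quoted as background from Morozov, Beketayev, and Weber \cite{interleaving-distance-merge-trees} (their Theorems 2 and 3), so there is no in-paper proof to compare your attempt against. Judged on its own, your argument is correct and is essentially the standard one. For (i), the inclusions $f^{-1}(-\infty,a]\subseteq g^{-1}(-\infty,a+\epsilon]$ for $\epsilon=\|f-g\|_\infty$ and their mirror images do give, after applying $\pi_0$, natural transformations whose triangle identities hold because a composite of sublevel inclusions is again a sublevel inclusion; this is exactly the original proof. For (ii), linearizing by the free functor and invoking algebraic stability \cite{chazal2009proximity} is a valid route, and the inequality $d_I(H_0(M),H_0(N))\leq d_I(M,N)$ is immediate from functoriality. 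One remark: the ``main obstacle'' you flag in (ii) --- identifying the elder-rule barcode with the barcode of the linearized module --- is dissolved by this paper's conventions, since it \emph{defines} $\B(M):=B(H_0(M))$ and defers the equivalence with the elder-rule branch decomposition to \cite{curry2018fiber}. Within those conventions, (ii) is purely functoriality plus the cited stability theorem, with no bookkeeping left to do.
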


While $d_B$ is the most common metric on barcodes in the TDA literature, it has a property that is undesirable in some settings: Informally, $d_B$ is sensitive only to the largest difference between two barcodes and not to smaller differences.  
To avoid such undesirable behavior,
 many applications of persistent homology and some theoretical works \cite{cohen2010lipschitz, turner2020medians, skraba2021wasserstein}
consider a generalization of $d_B$ called the \emph{$p$-Wasserstein distance}, denoted $d^p_\WW$; see \cpageref{def:wasserstein} for the definition.
Here $p\in [1,\infty]$ is a parameter and for $p=\infty$ we have that $d^\infty_{\WW}=d_B$.\footnote{There are various definitions of the $p$-Wasserstein distance in the TDA literature, which differ from each other by at most a factor of 2 \cite{cohen2010lipschitz, bubenik:algebraicwasserstein}.  In this paper, we use the version introduced by Robinson and Turner \cite{turner2017wasserstein}, and studied in \cite{skraba2021wasserstein,LB2021}.}
As the parameter $p$ decreases, the distances  $d^p_{\WW}$ become more sensitive to small differences between a pair of barcodes.
Because of this, the distances $d^p_{\WW}$ with small $p$, typically $p=1$ or $2$, are often preferred in practical applications; see \cite[Section 1]{LB2021} for a list of applications.

Both the bottleneck distance $d_B$ and the merge tree interleaving distance $d_I$ turn out to be instances of a general categorical definition of interleaving distances introduced by Bubenik and Scott \cite{bubenik2012:categorification}; this is shown for $d_B$ in \cite{BL2020:diagrams} and for $d_I$ in \cite[Proposition 3.11]{curry2021:DMT}.
As such, the two distances are closely related.  It is perhaps unsurprising, then, that the undesirable properties of $d_B$ mentioned above carry over to $d_I$: That is, $d_I$ is only sensitive to the largest difference between a pair of merge trees, and is insensitive to the smaller differences between them.

With this in mind, it is natural to ask whether we can define an $\ell^p$-type distance on merge trees analogous to the distance $d_\WW^p$ on barcodes, with similar theoretical properties as those given by \Cref{thm:morozov-stability-review}.
In this paper, we introduce such a distance, the \emph{$p$-presentation distance}, for each $p\in [1,\infty]$.
This distance is an analogue of the $\ell^p$-distance on multiparameter persistence modules recently introduced in \cite{LB2021}, and several of our main results are merge tree analogues of results from \cite{LB2021}.

To state an analogue of \Cref{thm:morozov-stability-review} for presentation distances, we will need some definitions:\label{def:cellular-monotone}
Let $X$ be a regular cell complex.
Following \cite{skraba2021wasserstein}, we say $f:X\to \R$ is \emph{cellular} if it is constant on each cell of $X$; and we say $f$ is \emph{monotone} if, in addition, its value on any cell $\sigma$ is greater than or equal to the values on $\partial \sigma$.  
Ordering the cells of $X$ arbitrarily, we may identify $f$ with an element of $\R^{|\Cells(X)|}$, so that the $\ell^p$-norm $\|f\|_p$ is well defined.

\begin{theorem}[$\ell^p$-stability properties of merge trees]\label{thm:main-theorem-summary}\mbox{}
For all $p\in [1,\infty]$,
\begin{itemize}
    \item[(i)] any pair of monotone cellular functions $f, g: X \to \R$ satisfies
    \[d_{\PP}^p(M_f,N_g) \leq ||f-g||_{p},\]
    \item[(ii)] any pair of merge trees $M$ and $N$ satisfies \[d_{\WW}^p(\B(M),\B(N)) \leq d_{\PP}^p(M, N).\]
\end{itemize}
\end{theorem}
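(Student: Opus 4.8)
The plan is to prove the two parts separately, in each case adapting to merge trees the analogous stability argument for multiparameter persistence modules in \cite{LB2021}; of the two, part~(ii) is the deeper, since it must control the global behaviour of the elder rule. Throughout I take $d_{\PP}^p$ to be the $p$-presentation distance defined earlier, so that --- exactly as in the $p=\infty$ (interleaving) case --- an upper bound on $d_{\PP}^p(M,N)$ is witnessed by a comparison of chosen presentations of $M$ and $N$: a matching of their generators and relations together with a vector of grade shifts whose $\ell^p$-norm is the cost.

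For part~(i) the key point is that a monotone cellular function $f\colon X\to\R$ yields a presentation of $M_f$ whose combinatorial data depend only on the regular cell complex $X$, not on $f$: the generators and relations are indexed by $\Cells(X)$ and their incidences, while the grades at which they appear are read off from the values of $f$ on the corresponding cells. First I would make this cellular presentation precise and verify that it presents $M_f$. Since $f$ and $g$ live on the same complex, their presentations share identical generators and relations and differ only in their grades, the discrepancy on each cell $\sigma$ being $f_\sigma-g_\sigma$. The identity matching equipped with the single shift vector $f-g\in\R^{|\Cells(X)|}$ is then a valid presentation comparison whose cost is, by construction, exactly $\norm{f-g}_p$; crucially this is one comparison rather than a telescoping of cellwise steps, so no passage from $\ell^p$ to $\ell^1$ occurs. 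The only delicate point is to confirm that shifting the grades by $f-g$ still produces admissible (monotone) presentations and that the matching meets the compatibility conditions in the definition of $d_{\PP}^p$; I expect both to follow from the monotonicity of $f$ and $g$ and the preservation of incidence.

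For part~(ii) I would show that the branch decomposition $M\mapsto\B(M)$ furnished by the elder rule is $1$-Lipschitz from $(\,\cdot\,,d_{\PP}^p)$ to $(\,\cdot\,,d_{\WW}^p)$. Fix a presentation comparison between $M$ and $N$ of cost within $\e$ of $d_{\PP}^p(M,N)$, with shift vector $\delta$. Each bar of $\B(M)$ has its birth determined by a generator (a branch creation) and its death by the relation at which that branch merges into an older one. Using the matching of generators and relations, I would build a partial matching of the bars of $\B(M)$ with those of $\B(N)$ --- pairing a bar with the bar whose defining generator and relation are matched, and sending the rest to the diagonal --- and bound the displacement of each paired bar's endpoints by the shifts of the corresponding generator and relation. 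Summing in $\ell^p$ bounds the $p$-Wasserstein cost by $\norm{\delta}_p$, and letting $\e\to0$ gives the claim.

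The main obstacle is exactly the nonlocal character of the elder rule in part~(ii): the death value of a branch depends not only on its own generator and killing relation but on which of two merging subtrees is declared elder, and a shift of the presentation can in principle reassign eldership and so reshuffle the birth--death pairing discontinuously. I expect to resolve this not by tracking eldership through presentations but by analysing the induced matching directly at the level of barcodes, showing in the spirit of the induced-matching arguments of Bauer and Lesnick --- as extended to the $\ell^p$ regime in \cite{LB2021} --- that any such reshuffling can only decrease the transport cost. Finally, specializing both parts to $p=\infty$ should recover \Cref{thm:morozov-stability-review}, since $d_{\WW}^\infty=d_B$ and $d_{\PP}^\infty=d_I$.
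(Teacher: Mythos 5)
Your part~(i) is essentially the paper's argument: both presentations are built on the vertex--edge incidence matrix of $X$, so they are compatible by construction, differ only in their label vectors, and the single comparison gives $\norm{L(P_M)-L(P_N)}_p\leq\norm{f-g}_p$ with no telescoping. One small correction: there is no need to check that ``shifting the grades by $f-g$'' preserves monotonicity --- $f$ and $g$ are each assumed monotone, so each yields a valid presentation of its own merge tree directly, and no shifting of one presentation into the other occurs. Also, only the $0$- and $1$-cells enter the presentation; higher cells contribute only slack to $\norm{f-g}_p$.

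For part~(ii) you take a genuinely different route, and it is here that your sketch has a gap. The paper's proof is a two-line reduction: applying the free functor $H_0$ to a pair of compatible merge-tree presentations yields compatible presentations of the persistence modules $H_0(M)$ and $H_0(N)$ with the \emph{same} label vectors (\Cref{lem:H_0-bound}), hence $\sd^p(M,N)\geq \sd^p(H_0(M),H_0(N))\geq d_{\PP}^p(H_0(M),H_0(N))$, and the bound $d_{\WW}^p\leq d_{\PP}^p$ for one-parameter persistence modules is then imported wholesale from \cite[Theorem~1.1]{LB2021}. Your plan instead works directly with the elder-rule barcode and tries to build a bar matching from a matching of generators and relations. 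You correctly identify the crux --- a perturbation of the labels can reassign eldership and discontinuously reshuffle which relation kills which branch --- but your proposed resolution (``induced-matching arguments \ldots\ as extended to the $\ell^p$ regime in \cite{LB2021}'') is precisely the content of the theorem the paper cites; as written, your argument defers the entire difficulty to the result it is meant to reprove, so nothing is actually established. There is also a smaller technical slip: $d_{\PP}^p$ is an infimum over \emph{chains} of semi-distance steps (\Cref{def:p-path-distance}), so there need not exist a single pair of compatible presentations of cost within $\e$ of $d_{\PP}^p(M,N)$. The correct logic is to bound $d_{\WW}^p(\B(M),\B(N))\leq\sd^p(M,N)$ and then invoke \Cref{lem:sd-universal-d}(ii) (or the triangle inequality for $d_{\WW}^p$ along the chain). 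If you want a complete proof of (ii) without reproving Wasserstein stability for barcodes from scratch, the $H_0$ reduction is the efficient path.
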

\Cref{thm:main-theorem-summary} refines the degree-0 case of a fundamental $\ell^p$-stability result for persistent homology, due to Skraba and Turner \cite{skraba2021wasserstein}. 
We also establish the following universality result for $d_{\PP}^p$, which parallels a result on 1- and 2-parameter persistence modules proved in \cite{LB2021}.

\begin{theorem}\label{Thm:Universality}
For any $p\in [0,\infty]$, if $d$ is any metric on merge trees satisfying the stability property of \Cref{thm:main-theorem-summary}\,(i), then $d \leq d_{\PP}^p$.
\end{theorem}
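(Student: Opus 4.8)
The plan is to reduce the theorem to a single \emph{realization lemma}: for any pair of merge trees $M$ and $N$ and any $\e > d_{\PP}^p(M,N)$, there exist a regular cell complex $X$ and monotone cellular functions $f,g\colon X\to\R$ with $M_f\cong M$, $N_g\cong N$, and $\norm{f-g}_p<\e$. Granting this, the theorem follows at once: if $d$ satisfies the stability property of \Cref{thm:main-theorem-summary}\,(i), then, using that a metric on merge trees is by definition invariant under isomorphism,
\[
d(M,N)=d(M_f,N_g)\leq \norm{f-g}_p<\e,
\]
and taking the infimum over all admissible $\e$ gives $d(M,N)\leq d_{\PP}^p(M,N)$. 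Thus all the content lies in the realization lemma, and this is where I expect the work to be.

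To prove the realization lemma I would start from a near-optimal \emph{presentation} witnessing $d_{\PP}^p(M,N)<\e$, i.e., the combinatorial data (generators recording the leaves/branches of the two trees together with the heights at which they merge, and a shift relating a presentation of $M$ to one of $N$) over which the presentation distance is defined as an infimum. From this data I would build a single domain $X$ by gluing the two presentations along their matched generators and subdividing so that each generator and each merge corresponds to a cell; I would then define $f$ and $g$ cellwise by the respective merge heights. Monotonicity of $f$ and $g$ holds because merge heights only increase toward the root, and the bound $\norm{f-g}_p<\e$ holds because $f$ and $g$ differ only on the cells recording the shift, whose $\ell^p$-contributions add up to the cost of the chosen presentation. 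The same template applies for every $p\in[0,\infty]$, since only the cost functional on the fixed combinatorial data changes with $p$.

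The main obstacle is to arrange simultaneously that (a) $X$ is a genuine regular cell complex on which \emph{both} $f$ and $g$ are monotone and cellular, and (b) the resulting sublevel-set merge trees are honestly isomorphic to $M$ and $N$, not merely close to them. The delicate case is that of unmatched generators---branches present in one tree but absent from the other: these must be placed in $X$ so that they are ``invisible'' to the other function (for instance, made to merge immediately in that filtration) while still contributing the correct amount to $\norm{f-g}_p$. I would handle this by padding both presentations to a common generating set, charging each phantom generator exactly its $\ell^p$ cost, in direct analogy with the treatment of trivial summands in the multiparameter setting of \cite{LB2021}. Finally I would check that the elder-rule barcode and tree structure read off from $f$ (respectively $g$) reproduce $M$ (respectively $N$), which amounts to verifying that the sublevel-set connectivity of the glued complex agrees with the intrinsic connectivity of the trees---a direct but careful computation.
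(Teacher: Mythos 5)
Your reduction hinges on a realization lemma stated for $d_{\PP}^p$ itself, and that lemma is false. Write $\delta(M,N)$ for the infimum of $\norm{f-g}_p$ over all monotone cellular $f,g$ on a common regular cell complex realizing $M$ and $N$. The proof of \Cref{thm:main-theorem-summary}\,(i) shows that any such pair $(f,g)$ yields compatible presentations (built from the vertices and edges of $X$) whose label distance is at most $\norm{f-g}_p$, so $\delta\geq \sd^p$; conversely, the lifting construction you describe (a $0$-cell per generator, a $1$-cell per relation, values given by birth times) gives $\delta\leq \sd^p$. Hence $\delta=\sd^p$, the presentation \emph{semi}-distance of \Cref{def:label-and-semi-distance}. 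But $\sd^p$ fails the triangle inequality and can be strictly larger than $d_{\PP}^p$: in \Cref{ex:CEX}, $d_{\PP}^p(N,Q)\leq \sd^p(N,M)+\sd^p(M,Q)\leq 1+r<\sd^p(N,Q)=\delta(N,Q)$, so for that pair no choice of $X$, $f$, $g$ achieves $\norm{f-g}_p<\e$ once $\e$ is slightly above $d_{\PP}^p(N,Q)$. Relatedly, your proposed proof begins with ``a near-optimal presentation witnessing $d_{\PP}^p(M,N)<\e$,'' but by \Cref{def:p-path-distance} $d_{\PP}^p$ is not an infimum over presentations; it is an infimum over \emph{chains} of merge trees with each link measured by $\sd^p$, and an optimal chain need not be witnessed by any single pair of compatible presentations.

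The repair is to prove the realization statement only at the level of a single pair of compatible presentations (this is \Cref{lem-geometric-lifting}, where your gluing construction and your padding of unmatched generators are essentially correct), which yields $d\leq \sd^p$ for every $p$-stable metric $d$. The step your argument omits entirely is the passage from $d\leq \sd^p$ to $d\leq d_{\PP}^p$. This uses that $d_{\PP}^p$ is by construction the \emph{largest} pseudometric bounded above by $\sd^p$ (\Cref{lem:sd-universal-d}\,(ii), via \Cref{bl-lemma}), together with the triangle inequality for $d$ itself: for any chain $M=Q_0,\dots,Q_n=N$ one has $d(M,N)\leq \sum_{i}d(Q_i,Q_{i+1})\leq \sum_{i}\sd^p(Q_i,Q_{i+1})$, and infimizing over chains gives $d\leq d_{\PP}^p$. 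This chain argument is where the hypothesis that $d$ is a metric is actually used; it cannot be bypassed by strengthening the realization lemma.
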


Several $\ell^\infty$-type distances in the TDA literature have been shown to satisfy similar universal properties \cite{lesnick2015theory,blumberg2017universality,bauer2020reeb,bauer2020universality}.
In particular, \cite{bauer2020reeb} gives a universality result for a metric on Reeb graphs, which are closely related to merge trees.

In addition, we show the following:
\begin{theorem}\label{Thm:Infty_Pres_Dist_Equals_Interleaving_Distance}
$d_{\PP}^\infty=d_{I}$, i.e., the $\infty$-presentation distance and interleaving distance on merge trees are equal.
\end{theorem}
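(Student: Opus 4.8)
The plan is to prove the two inequalities $d_{\PP}^\infty \leq d_I$ and $d_I \leq d_{\PP}^\infty$ separately, since equality of metrics is most naturally established by sandwiching. Both the presentation distance $d_{\PP}^\infty$ and the interleaving distance $d_I$ have been identified as categorical interleaving distances (the excerpt notes that $d_I$ is such via \cite{curry2021:DMT}), so a tempting high-level strategy is to invoke a general uniqueness principle for interleaving-type distances. However, since $d_{\PP}^p$ is defined as a genuinely new $\ell^p$-flavored construction, I expect the cleanest route is to unwind both definitions concretely at $p=\infty$ and exhibit, for each side, an explicit morphism realizing the bound.

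First I would handle $d_I \leq d_{\PP}^\infty$. Here I would take an optimal (or near-optimal, up to $\e$) pair of presentations/morphisms witnessing $d_{\PP}^\infty(M,N) < \delta$ and show that the data packaged by a $p=\infty$ presentation degenerates into a genuine $\delta$-interleaving of merge trees in the sense of Morozov et al. Concretely, an $\infty$-presentation with cost bounded by $\delta$ should, because the $\ell^\infty$-norm controls every coordinate uniformly, produce maps $M \to N[\delta]$ and $N \to M[\delta]$ (shifts by $\delta$) whose composites agree with the structural shift maps; these are precisely the triangle-commuting conditions defining a $\delta$-interleaving. Taking the infimum over $\delta$ then yields $d_I \leq d_{\PP}^\infty$.

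For the reverse inequality $d_{\PP}^\infty \leq d_I$, I would start from a $\delta$-interleaving $(\alpha,\beta)$ between $M$ and $N$ and build from it a pair of $\infty$-presentation morphisms whose cost is at most $\delta$. The key observation is that a uniform shift by $\delta$ corresponds exactly to an $\ell^\infty$-bound of $\delta$ on the presentation data (unlike finite $p$, where a uniform shift across many cells would inflate the $\ell^p$-norm), so the interleaving maps translate directly into admissible presentation morphisms with the correct $\ell^\infty$-cost. One must check that the commutativity of the interleaving triangles gives exactly the compatibility required of a valid presentation morphism, and that no additional cost is incurred.

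The main obstacle will be bookkeeping the precise dictionary between the two formalisms: the interleaving distance is phrased in terms of shift functors and natural transformations of cosheaves/functors, whereas $d_{\PP}^\infty$ is phrased in terms of presentations and an $\ell^p$-norm on perturbations. Verifying that the $\delta$-shift in one framework corresponds exactly (not just up to a constant) to an $\ell^\infty$-cost of $\delta$ in the other — and that this correspondence is tight in both directions — is where the real content lies; a loose translation would only give equality up to a multiplicative factor. I would therefore expect to spend most of the effort confirming that the extremal configurations match, likely by reducing to the structure of individual branches and invoking the $p=\infty$ specialization of whatever cost functional defines $d_{\PP}^p$.
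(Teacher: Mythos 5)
Your overall architecture---proving the two inequalities separately by translating between interleavings and presentation data, and exploiting the fact that a uniform $\epsilon$-shift has $\ell^\infty$-cost exactly $\epsilon$---is the same as the paper's. Two issues, one minor and one substantive. The minor one: $d_{\PP}^\infty$ is defined as an infimum of sums of $\sd^\infty$ over chains of merge trees, so ``a pair of presentations witnessing $d_{\PP}^\infty(M,N)<\delta$'' is not what an optimal witness looks like. The paper handles this cleanly by reducing everything to the semi-distance: it shows $\sd^\infty=d_I$ and then invokes the fact (\Cref{lem:sd-universal-d}) that $d_{\PP}^\infty$ is the largest pseudometric bounded above by $\sd^\infty$, together with the triangle inequality for $d_I$. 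You should make this reduction explicit; otherwise your first inequality only bounds $d_I$ by $\sd^\infty$, not by $d_{\PP}^\infty$, and your second inequality needs to survive the passage to chains.

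The substantive gap is in the direction $d_{\PP}^\infty\leq d_I$. The semi-distance $\sd^\infty$ is an infimum over \emph{compatible} presentations, i.e., pairs $P_M$, $P_N$ with \emph{identical} underlying $0$--$1$ matrices and label vectors differing by at most $\epsilon$ in $\ell^\infty$. An $\epsilon$-interleaving $(\varphi,\psi)$ gives you maps between the trees, but it does not hand you a common presentation matrix: arbitrary presentations of $M$ and $N$ have unrelated shapes, and saying the interleaving maps ``translate directly into admissible presentation morphisms'' skips the actual construction. The paper's solution is to build an augmented matrix with generator rows $G_M\sqcup G_N$ and relation columns $R_M\sqcup R_N\sqcup G_M\sqcup G_N$, where the extra columns contain an identity block together with blocks $P_\varphi$ and $P_\psi$ recording where $\varphi$ and $\psi$ send each generator; this single matrix is then given two different labelings (one shifting the $N$-data up by $\epsilon$, the other shifting the $M$-data), and one must verify that each labeled matrix still presents the correct tree and that all corresponding labels differ by exactly $\epsilon$. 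This doubling construction is the real content of the theorem, and your proposal does not contain it or an equivalent idea. Until you supply a mechanism for producing a common underlying matrix from an interleaving, the inequality $d_{\PP}^\infty\leq d_I$ is not established.
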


Together, \Cref{Thm:Universality,Thm:Infty_Pres_Dist_Equals_Interleaving_Distance} give us a universality result for the interleaving distance on merge trees.  A version of this result also appears in \cite{bauer2021quasi}, and was previously announced in a 2019 workshop talk \cite{bauer2019space}.  Whereas our paper only considers merge trees with finitely many nodes, \cite{bauer2021quasi} establishes universality of the interleaving distance for locally finite merge trees.  

In view of the good theoretical properties of the distances $d_{\PP}^{p}$, the question of whether these distances  can be efficiently computed is interesting.  Indeed, if they could be computed, then they could likely be used in practical applications in much the same ways that the Wasserstein distance on barcodes is commonly used.  It is known that computing $d_I=d_{\PP}^{\infty}$ on merge trees is NP hard \cite{agarwal2018computing}, but is fixed-parameter tractable \cite{farahbakhsh2019fpt};  
we would like to know whether these results extend to $d_{\PP}^{p}$ for all $p<\infty$, but we leave this to future work.

\subsection{Other metrics on merge trees}\label{sec:related-work}

While our $p$-presentation distance on merge trees is novel, many metrics on merge trees have been considered.
Recall that Morozov's interleaving distance $d_I$ \cite{interleaving-distance-merge-trees}, discussed above, is one example.  We mention several others: Various forms of edit distances on Merge trees have been proposed \cite{edit-distance-MTs-2020,pont2021wasserstein,sridharamurthy2021comparative}.  Since merge trees can be viewed as metric spaces in their own right, the \emph{Gromov-Hausdorff distance} on metric spaces can be used to compare merge trees \cite{agarwal2018computing}.  A Fr\'{e}chet-like distance between rooted trees was introduced in \cite{farahbakhsh2021fr}, along with an algorithm to compute this distance.  This distance was then applied to merge trees. 

The \emph{$p$-cophenetic distance} \cite{cardona2013cophenetic} is a metric on labeled merge tress which is similar to $d_{\PP}^{p}$;  see \Cref{def:cophenetic}.  In the $p=\infty$ case, an extension to unlabeled merge trees \cite{Munch2019:cophenetic,intrinsic_interleaving} was shown to be equal to $d_I$.  
Consequently, by \Cref{Thm:Infty_Pres_Dist_Equals_Interleaving_Distance}, the $\infty$-cophenetic distance and the $\infty$-presentation distance are the same.
However, for $p<\infty$, $d_{\PP}^{p}$ is a lower bound for the $p$-cophenetic distance.
\Cref{ex:instability-cophenetic} illustrates how they differ and demonstrates that the $p$-cophenetic distance lacks the stability of \cref{thm:main-theorem-summary}\,(i).

In addition, several metrics on \emph{Reeb graphs} have been studied; since the geometric realization (\cref{defn:geometric-realization}) of every merge tree is a Reeb graph, any metric on Reeb graphs specializes to a metric on merge trees.  A definition of interleavings different from that in \cite{interleaving-distance-merge-trees} was used to define a metric on Reeb graphs in \cite{categorified_reeb}.  A family of truncated interleaving distances generalizing this was introduced in  \cite{chambers2021:truncated}. 
The \emph{functional distortion distance} on Reeb graphs \cite{bauer2014measuring} was shown to satisfy stability properties analogous to \cref{thm:morozov-stability-review} and to be strongly equivalent to the interleaving distance \cite{bauer2015strong}.
Edit distances on Reeb graphs were defined in \cite{di2016edit,bauer2020reeb},  
and \cite{bauer2020reeb} showed that its Reeb graph edit distance is universal.  
Recent work \cite{bollen2021reeb} surveys these metrics on Reeb graphs and their relationships.  
Finally, the \emph{contortion distance} \cite{bauer2021quasi} was shown to be strongly equivalent to each of the distances considered in \cite{categorified_reeb}, \cite{bauer2014measuring}, and \cite{bauer2020reeb}, and to be universal on contour trees.

\section{Merge trees}
In this section, we define merge trees.  We work primarily with a categorical definition, which is convenient for defining the interleaving and presentation distances.  
Recall that we may regard any partially ordered set $(P,\preceq)$ as a category with one object for each element $p\in P$ and a morphism from $p$ to $q$ whenever $p\preceq q$.
We will be particularly interested in the  posets $\R$ and $[n]=\{0,1,\dots,n\}$.

\begin{definition}
Given a topological space $X$ and function $f\colon X\to \R$, the sublevel set filtration of $f$ is the functor $S^{\uparrow}f\colon \R \to \Top$ given by $S^{\uparrow}f(t)=f^{-1}(-\infty,t]$, with $S^{\uparrow}f(s\leq t)$ the inclusion $S^{\uparrow}f(s)\hookrightarrow S^{\uparrow}f(t)$.
\end{definition}

\begin{definition}[cf.~\cite{patel2018generalized,curry2018fiber}]\label{def:merge-tree}
A \define{persistent set} is a functor $M\colon \R \to \Set$. 
\end{definition}

\begin{example}
Letting $\pi_0\colon \Top \to \Set$ denote the connected components functor, the composition $\pi_0\circ S^{\uparrow}f$ is a persistent set.
\end{example}

\begin{definition}[cf.~\cite{patel2018generalized}]
We say a persistent set $M$ is \define{constructible} if there exists a set
\[\tau := \{s_0 < s_1 < \cdots < s_{n}\}\subset \R \qquad \text{such that} \]  
\begin{enumerate}
\item If $M\ne \emptyset$, then $\tau\ne \emptyset$ and $M(t)=\emptyset$ for all $t<s_0$,
\item $M(s\leq t)$ is an isomorphism whenever $s,t\in [s_i,s_{i+1})$, and also for $s,t\in [s_n,\infty)$.
\end{enumerate}

If $M$ is constructible, then we call the minimal such $\tau$ the set of \define{critical times} of $M$, and denote it $\tau_M$.  For $s_i\in \tau$, we call $M(s_i)$ a \define{critical set}.
\end{definition}
Note that to specify a constructible merge tree $M$ (up to ismorphism), it suffices to specify $\tau_M$, the critical sets $M(s_i)$, and the functions $m_i:=M(s_i\leq s_{i+1})$.

\begin{remark} Equivalently, one can define constructibility using categorical language:  A persistent set $M$ is constructible if it is  isomorphic to the left Kan extension of some functor $M':[n] \to \Set$ along an order preserving map $j \colon [n] \hookrightarrow \mathbb R$.
\end{remark}

\begin{definition}
A \define{merge forest} is a constructible persistent set $M$ where each $M(t)$ is finite.
A \define{merge tree} is a merge forest where $M(t)=\{*\}$ for $t$ sufficiently large.
\end{definition}
We denote the category of merge forests by $\Forest$ and the category of merge trees by $\Merge$.
The categorical perspective on merge trees was previously considered in \cite{bubenik2015metrics,patel2018generalized,curry2018fiber} and offers the advantage of a streamlined definition of the interleaving distance; \Cref{def:interleaving}.

\begin{remark}
Applying the usual disjoint union of sets at each index, we obtain a well defined notion of disjoint union of persistent sets.  The disjoint union of finitely many merge forests is itself a merge forest.
\end{remark}

The next example is fundamental, as it provides a notion of generators for merge trees.

\begin{example}\label{ex:strand}
A \define{(closed) strand born at $s$}, written $F_s : \R \to \Set$, is the persistent set 
\[
F_s(t) :=
\begin{cases}
\emptyset & \text{if } t < s, \\
\{*\} & \text{if } t \geq s.
\end{cases}
\]
Closed strands are clearly constructible.  
The analogous \define{open strands}, where $F_s(t)=\{*\}$ if and only if $t>s$, are not constructible.
Our strands will always be closed strands.
For $m\in \mathbb{Z}_{>0}$, we let $F^m_s$ denote the disjoint union of $m$ copies of $F_s$.
\end{example}

Following \cite{curry2021:DMT}, we define the geometric realization of a constructible persistent set; this relates our categorical definition of a merge tree to the topological and graph-theoretic definitions that one typically sees in the literature.
Our definition is equivalent to that of \cite[Definition A.3]{curry2021:DMT} in the constructible setting, though slightly different in the details.

\begin{definition}[Geometric realization]\label{defn:geometric-realization}
  Given a constructible persistent set $M$, the \define{geometric realization} of $M$ is a pair $|M|=(X,\gamma)$ where $X$ is a topological space and $\gamma:X\to \R$ is a continuous function.
We take the set underlying $X$ to be $\sqcup_{t\in \R} M_t$, and for $x\in M_t$, we define $\gamma(x)=t$.  It remains to specify the topology on $X$.  We regard $X$ as a poset, with $x \preccurlyeq y$ iff both $\gamma(x)\leq \gamma(y)$ and $M(\gamma(x)\leq \gamma(y))(x)=y$.  For $x\in X$, let
\[C_x=\{y\in X\mid \textup{$y$ and $x$ are comparable}\}.\]
We declare a set $U\subset X$ to be open if and only if for each $x\in U$, there exists $V\subset \R$ open such that $(\gamma^{-1}(V)\cap C_x)\subset U$.
It is easily verified that this indeed defines a topology on $X$, and that $\gamma$ is continuous with respect to this topology.
\end{definition}

Abusing terminology slightly, we say that a function $f:Y\to \R$ is \define{constructible} if $\pi_0\circ S^{\uparrow}f$ is constructible.  One can check that if $f$ is constructible and continuous, 
then $|\pi_0 \circ S^{\uparrow}f|$ is equivalent to the topological construction of a merge tree considered in \cite{interleaving-distance-merge-trees}. 

A point $u$ is an \define{ancestor} of a point $v$ if  $v \preccurlyeq u$.  The \define{least common ancestor} $\LCA(v,w)$ of nodes $v$ and $u$ is the common ancestor of $v$ and $w$ with minimal height.  $\LCA(v,w)$ may not exist, but if it exists it is unique.

\begin{example}\label{example-m}
We specify a merge tree $M$ by taking $\tau_M=\{0,1,2,3,5\}$,
$M_0 := \{0\}$, $M_1 := \{0, 1\}$, $M_2 := \{0, 1, 2\}$, $M_3 = \{0, 2\}$, and $M_5 = \{0\}$,
\[
m_2 : \begin{cases}0 \mapsto 0 \\ 1 \mapsto 0 \\ 2 \mapsto 2 \end{cases}
\qquad \text{and} \qquad
m_3 : \begin{cases} 0 \mapsto 0 \\ 2 \mapsto 0 \end{cases}
\]
and the remaining $m_i$ to be inclusions.  The diagram
\[
\begin{tikzcd}
M_0 \ar[r, "m_0"] & M_1 \ar[r, "m_1"] & M_2 \ar[r, "m_2"] & M_3 \ar[rr, "m_3"] & & M_5
\end{tikzcd}
\]
can be represented pictorially by:

\[
    \begin{tikzpicture}[->, very thick, gray]
        \coordinate (02) at (0, 2);
        \coordinate (12) at (1, 2);
        \coordinate (13) at (1, 3);
        \coordinate (21) at (2, 1);
        \coordinate (22) at (2, 2);
        \coordinate (23) at (2, 3);
        \coordinate (31) at (3, 1);
        \coordinate (32) at (3, 2);
        \coordinate (52) at (5, 2);

        \draw[shorten >=4pt, shorten <=4pt] (02) -- (12);
        \draw[shorten >=4pt, shorten <=4pt] (12) -- (22);
        \draw[shorten >=4pt, shorten <=4pt] (22) -- (32);
        \draw[shorten >=4pt, shorten <=4pt](32) -- (52);
        \draw[shorten >=4pt, shorten <=4pt] (13) -- (23);
        \draw[shorten >=4pt, shorten <=4pt] (23) .. controls (2.5, 3) .. (32);
        \draw[shorten >=4pt, shorten <=4pt] (21) -- (31);
        \draw[shorten >=4pt, shorten <=4pt] (31) .. controls (4.5, 1) .. (52);

        \foreach \x in {0,1,2,3,5} {
        \draw[-, black, thin] (\x ,0) node[anchor=north] {\x} -- (\x,0.1);
    };
        \draw[black, -, thin] (0, 0) -- (7, 0);
        \foreach \x in {(02), (12), (13), (21), (22), (23), (31), (32), (52)}{
        \fill[black] \x circle[radius=2pt];
    };
    \end{tikzpicture}
\]
The geometric realization of $M$ can be drawn as follows:

\[
    \begin{tikzpicture}
    \draw [ultra thick, ->] (0, 4) -- (7, 4);
    \draw [ultra thick] (1, 5) .. controls (2.5, 5) .. (3, 4);
    \draw [ultra thick] (2, 3) .. controls (4.5, 3) .. (5, 4);
    \foreach \x in {0,1,2,3,5} {
        \draw (\x ,2) node[anchor=north] {\x} -- (\x,2.1);
    };
    \draw (0, 2) -- (7, 2);
\end{tikzpicture}
\]
\end{example}

\paragraph*{Barcodes of merge trees}

Fix a field $k$.  A \define{persistence module} is a functor $N:\R\to \Vect$, where $\Vect$ denotes the category of $k$-vector spaces.
Every persistent set $M: \R \to \Set$ has an associated persistent homology module $H_0(M):\R \to \Vect$ given by the free functor from $\Set$ to $\Vect$.

Recall that a barcode is a multiset of intervals in $\R$.  According to a well-known structure theorem \cite{crawley2015decomposition}, there is a unique barcode $B(N)$ associated to any pointwise finite dimensional (PFD) persistence module $N$.
Thus, any merge tree $M$ has a well-defined barcode $B(M):=B(H_0(M))$.  
As suggested in the introduction, $B(M)=B(H_0(S^{\uparrow}f))$ where $|M| = (X, f)$ is the geometric realization of $M$.
Moreover, if $f:X\to \R$ is a constructible continuous function, then $B(\pi_0 \circ S^{\uparrow}f)=B(H_0(S^{\uparrow}f))$ \cite{curry2018fiber}.

\section{Presentations of merge trees}

We now define presentations of merge trees.
Recall that, in the usual algebraic setting, presentations are defined in terms of generators and relations.
For merge trees, generators correspond to closed strands, as defined in \Cref{ex:strand}, which can be thought of as starting branches that emanate from each leaf node. 
An internal/merge node above leaf nodes $i$ and $j$ witnesses the equality $\text{branch}_i=\text{branch}_j$ by mapping a relating strand into the generating strands for $i$ and $j$.  
Each presentation $P_M$ of a merge tree $M$ then gives rise to a matrix that encodes which generators are identified by which relation.
We show that any pair of merge trees can be given presentations so that these matrices are identical---we call these "compatible presentations."
\begin{definition}[Coequalizer]
Given sets $A, B$, the \define{coequalizer} of a pair of functions $\alpha, \beta \colon A \rightrightarrows B $ is the set of equivalence classes $C:=B/\sim$, where
$\sim$ is the equivalence relation on $B$ generated taking $\alpha(x)\sim\beta(x)$ for all $x\in A$.
Let $q:B\to C$ denote the quotient map.  $C$ satisfies the following universal property: Given a  function $\gamma: B \to D$ such that $\gamma\circ \beta = \gamma \circ \alpha$, there is a unique function $\delta: C \to D$ such that $\gamma=\delta\circ q$.
The definition of a coequalizer extends pointwise to persistent sets: Given persistent sets $A$ and $B$, the \define{coequalizer} of a pair of natural transformations $\alpha, \beta : A \rightrightarrows B $ is the persistent set $C$ such that for all $t\in \R$, $C(t)$ is the coequalizer of $\alpha(t), \beta(t) \colon A(t) \rightrightarrows B(t)$, with the maps internal to $C$ given by universal properties.
The universal property of coequalizers of persistent sets is completely analogous. 
\end{definition}
\begin{definition}[Presentation]\label{def:presentation}
A collection of strands $\{G_i\}$ and $\{R_j\}$---called \define{generators} and \define{relations}, respectively---and merge functions $f_j, g_j : R_j \to G := \sqcup_i G_i$ define a \define{presentation} of $M$ if $M$ is the coequalizer in $\Forest$ of the following diagram, written $P_M$,
\[
\begin{tikzcd}
\bigsqcup_j R_j  \ar[r, yshift=.7ex, "f"] \ar[r, yshift=-.7ex, "g"'] &
\bigsqcup_i G_i \ar[r, dashed] & M.
\end{tikzcd}
\]
The maps $f$ and $g$ are induced by the merge functions $f_j, g_j$ and the universal mapping property of disjoint unions.
\end{definition}

Although this definition is cloaked in category theory, coequalizers and disjoint unions formalize gluing constructions in topology, which are more generally cast in terms of colimits.
Intuitively, relating strands indicate where generating strands are glued together; see \Cref{fig:fork2}.
    \begin{figure}[h!]
        \centering
        \resizebox{.8\textwidth}{!}{
    \begin{tikzpicture}
        \draw [->, ultra thick] (0, 3) node[above] {$G_1$} --  (4, 3);
        \draw [->, ultra thick] (1, 1) node[below] {$G_2$} --  (4, 1);
        \draw [->, ultra thick, brown] (2, 2) -- node[below] {$R$} (4, 2);
        \draw [->, shorten >=2pt, shorten <= 2pt] (2, 2) to [bend left] node[midway, left] {$f$} (2, 3);
        \draw [->, shorten >=2pt, shorten <= 2pt] (2, 2 - 0.05) to [bend right] node[midway, left] {$g$} (2, 1);
        \draw [->, thick, brown] (2,3 - 0.1) -- (4, 3 - 0.1);
        \draw [->, thick, brown] (2,1 + 0.1) -- (4, 1 + 0.1);
        \draw [->, dashed, thick] (5, 1.5) -- (6, 1.5);
        \foreach \x in {0,1,2, 3, 4} {
        \draw (\x,0) node[anchor=north] {\x} -- (\x,0.1);
        \draw (\x + 7 ,0) node[anchor=north] {\x} -- (\x + 7,0.1);
         };
        \draw (0, 0) -- (4, 0);
        \draw (7, 0) -- (11, 0);
        \draw [ultra thick] (7, 3) .. controls (8.5, 3) .. (9, 2);
        \draw [ultra thick] (8, 1) .. controls (8.5, 1) .. (9, 2);
        \draw [ultra thick, ->] (9, 2) to (11, 2);
    \end{tikzpicture}}
        \caption{Presenting a merge tree with two branches as a coequalizer.}
        \label{fig:fork2}
    \end{figure}
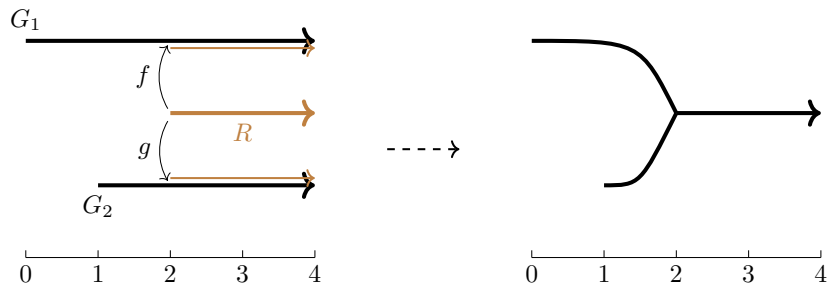

We can encode a presentation by the presentation matrix and its label vector.
\begin{definition}[Presentation matrix and label vector]
Given a presentation $P_M$ of a merge tree $M$ with $k$ generators and $l$ relations,
\[
\begin{tikzcd}
\bigsqcup_{j = 1}^l R_j  \ar[r, yshift=.7ex, "f"] \ar[r, yshift=-.7ex, "g"'] &
\bigsqcup_{i = 1}^k G_i \ar[r, dashed] & M,
\end{tikzcd}
\]
we can pick an ordering of generators and relations to obtain a $(k \times l)$ \define{presentation matrix} where the $i$-th row corresponds to the $i$-th generator $G_i$ and the $j$-th column corresponds to the $j$-th relation $R_j$.
The $(i, j)$-entry of the presentation matrix is 1 if either of $f_j, g_j : R_j \to G$ maps to $G_i$ and 0 otherwise.

The \define{label vector} of $P_M$ is the $(k+l)$-vector $L(P_M)$ where the first $k$ entries encode the birth times of each generating strand and the remaining $l$ entries encode the birth times of each of the relating strands. We will separate the row (generator) birth times from the column (relation) birth times by a semi-colon for legibility.
\end{definition}

We now give several examples of presentations, presentation matrices, and their label vectors.
We will see in particular that presentations are \emph{not} unique.
Indeed, one can always modify an existing presentation by introducing an extra generating strand that is then killed by an identical relating strand, as in \Cref{presentation-example,trivial-tree}. 
\Cref{ex:diff-merge-fns} shows that once two generators have been merged, any further merge event can be encoded by a merge function that maps to either generator.

\begin{example}[Tree with one leaf node]\label{trivial-tree}
If $M$ is a merge tree with one leaf node born at time $s$, then there is a presentation $P_M$ given by
\[
\begin{tikzcd}
F_s  \ar[r, yshift=.7ex, "\id"] \ar[r, yshift=-.7ex, "\id"'] &
F_s \ar[r, dashed] & M.
\end{tikzcd}
\]
The corresponding presentation matrix and label vector are, respectively,
        \[
        \begin{blockarray}{cc}
                & s  \\
            \begin{block}{c(c)}
            s & 1 \\
            \end{block}
        \end{blockarray}\qquad \text{ and }\qquad
        L(P_M) = [s; s].
        \]
        One can also obtain a presentation $P'_M$ of $M$ with one generator $F_s$ and no relations.
        In this case, the presentation matrix of $P'_M$ is an (empty) $1 \times 0$ matrix, but whose label vector is $L(P'_M)=[s]$.
\end{example}
\begin{example}[Adding a trivial generator and relation] \label{presentation-example}
Let $M$ be a merge tree with two leaves born at times $0$ and $1$ that merge at time $2$.
The presentation $P_M$ in \Cref{fig:fork2} uses two generators and one relation
\[
\begin{tikzcd}
F_2  \ar[r, yshift=.7ex, "f"] \ar[r, yshift=-.7ex, "g"'] &
F_0 \sqcup F_1\ar[r, dashed] & M,
\end{tikzcd}
\]
where $f: F_2 \to F_0$, and $g: F_2 \to F_1$ are Merge functions.
One can modify $P_M$ to obtain a presentation $P'_M$ by introducing an extra generator and relation $F_a$, for any $a \in [1, 2)$,
\[
\begin{tikzcd}
F_a\sqcup F_2  \ar[r, yshift=.7ex, "f_1\sqcup f_2"] \ar[r, yshift=-.7ex, "g_1\sqcup g_2"'] &
F_0 \sqcup F_1 \sqcup F_a \ar[r, dashed] & M,
\end{tikzcd}
\]
with $f_1: F_a \to F_a$, $g_1: F_a \to F_1$ and $f_2: F_2 \to F_1$, $g_2: F_2 \to F_0$; see \Cref{fig:add-trivial-relation}.
The corresponding presentation matrices and label vectors for $P_M$ and $P'_M$ are then, respectively,

\[
            \begin{blockarray}{cc}
                & 2  \\
            \begin{block}{c(c)}
            0 & 1 \\
            1 & 1 \\
            \end{block}
            \end{blockarray}
            \quad L(P_M)=[0,1;2]
\quad \text{ and }\quad
\begin{blockarray}{ccc}
                & 2 & a \\
            \begin{block}{c(cc)}
            0 & 1 & 0 \\
            1 & 1 & 1 \\
            a & 0 & 1 \\
            \end{block}
        \end{blockarray}
        \quad L(P'_M)=[0,1,a;2,a].
\]
\begin{figure}[h!]
    \centering
    \begin{tikzpicture}
        \draw [->, ultra thick] [shift={(0, 2)}] (0, 3) node[above] {$F_0$} --  (4, 3);
        \draw [->, ultra thick] [shift={(0, 2)}]  (1, 1) node[below] {$F_1$} --  (4, 1);
        \draw [->, ultra thick, brown] [shift={(0, 2)}]  (2, 2) -- node[below] {$F_2$} (4, 2);
        \draw [->, brown] [shift={(0, 2)}]  (2 - 0.05, 2 + 0.05) to [bend left] node[midway, left] {$f$} (2, 3 - 0.1);
        \draw [->, brown] [shift={(0, 2)}] (2- 0.05, 2 - 0.05) to [bend right] node[midway, left] {$g$} (2, 1 + 0.1);
        \foreach \x in {0,1,2, 3, 4} {
        \draw (\x,0) node[anchor=north] {\x} -- (\x,0.1);
        \draw [shift={(7, 0)}] (\x ,0) node[anchor=north] {\x} -- (\x,0.1);
         };
        \draw (0, 0) -- (4, 0);
        \draw [shift={(7, 0)}] (0, 0) -- (4, 0);

        \draw [->, ultra thick] [shift={(7, 2)}] (0, 3) node[above] {$F_0$} --  (4, 3);
        \draw [->, ultra thick] [shift={(7, 2)}]  (1, 1) node[above, left] {$F_1$} --  (4, 1);
        \draw [->, ultra thick, blue] [shift={(7, 2)}]  (2, 2) -- node[below] {$F_2$} (4, 2);
        \draw [->, blue] [shift={(7, 2)}]  (2 - 0.05, 2 + 0.05) to [bend left] node[midway, left] {$g_2$} (2, 3 - 0.1);
        \draw [->, blue] [shift={(7, 2)}] (2- 0.05, 2 - 0.05) to [bend right] node[midway, left] {$f_2$} (2, 1 + 0.1);
        \draw[->, ultra thick, orange] [shift={(7,0)}] (1.5, 2) -- node[below] {$F_a$} (4, 2) ;
        \draw[->, ultra thick] [shift={(7,0)}] (1.5, 1) node[below] {$F_a$} -- (4, 1);
        \draw[->, orange] [shift={(7,0)}] (1.5 - 0.05, 2) to [bend left] node[midway, left] {$g_1$} (1.5, 3-0.1);
        \draw[->, orange] [shift={(7,0)}] (1.5 - 0.05, 2) to [bend right] node[midway, left] {$f_1$} (1.5, 1 + 0.1 );
    \end{tikzpicture}
    \caption{Presentations $P_M$ (left) and $P'_M$ (right) of the merge tree from \Cref{fig:fork2}.}
    \label{fig:add-trivial-relation}
\end{figure}
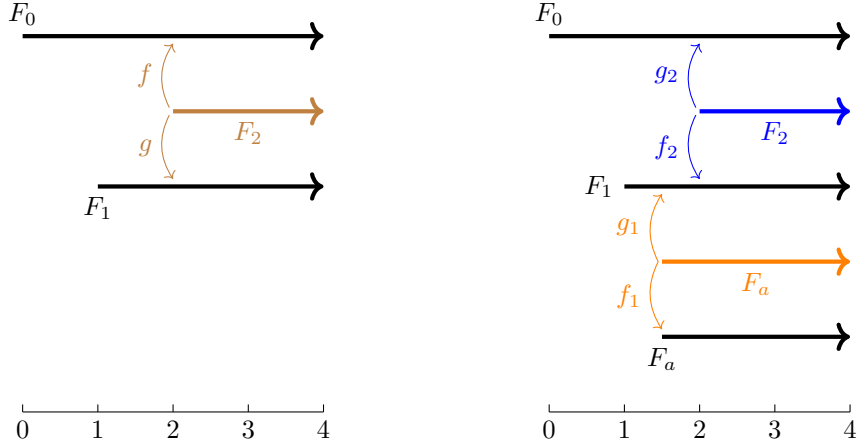
\end{example}

\begin{example}[Different merge functions]\label{ex:diff-merge-fns}
Consider the merge tree from \Cref{example-m} with three leaf nodes and two least common ancestors.
Since one least common ancestor occurs after the other, there is a choice as to which generator you choose to attach to.
\Cref{fig:3branch-tree-presentations} shows these two possible choices.
    \begin{figure}[h!]
        \centering
        \resizebox{\textwidth}{!}{%
            \begin{tikzpicture}
        \foreach \x in {0,1,2,3,5} {
        \draw (\x ,0) node[anchor=north] {\x} -- (\x,0.1);
        \draw [shift={(9, 0)}]  (\x ,0) node[anchor=north] {\x} -- (\x,0.1);
        };
        \draw (0, 0) -- (7, 0);
        \draw (9, 0) -- (16, 0);
        \draw[ultra thick, ->] (1, 5) node[left] {$G_1$} -- (7, 5);
        \draw[ultra thick, ->] (0, 3) node[left, below] {$G_2$} -- (7, 3);
        \draw[ultra thick, ->] (3, 1) node[left, below] {$G_3$} -- (7, 1);
        \draw[ultra thick, ->, orange] (3, 4) -- (7, 4) node[below] {$R_1$} ;
        \draw[ultra thick, ->, blue] (5, 2)  -- (7, 2) node[below] {$R_2$} ;
        \draw[->, orange] (3, 4) to [bend left] node[midway, left] {$f_1$} (3, 5);
        \draw[->, orange] (3, 4) to [bend right] node[midway, left] {$g_1$} (3, 3);
        \draw[ ->, blue] (5, 2)  to [bend left] node[midway, left] {$f_2$} (5, 3);
        \draw[ ->, blue] (5, 2)  to [bend right] node[midway, left] {$g_2$} (5, 1);
        \node at (1, 1) {(a)};
        \draw[ultra thick, ->] [shift={(9, 0)}] (1, 5) node[left] {$G_1$} -- (7, 5);
        \draw[ultra thick, ->] [shift={(9, 0)}] (0, 3) node[left, below] {$G_2$} -- (7, 3);
        \draw[ultra thick, ->] [shift={(9, 0)}] (3, 1) node[left, below] {$G_3$} -- (7, 1);
        \draw[ultra thick, ->, orange] [shift={(9, 0)}]  (3, 4) -- (7, 4) node[below] {$R_1$} ;
        \draw[ultra thick, ->, blue] [shift={(9, 0)}] (5, 2)  -- (7, 2) node[below] {$R_2$} ;
        \draw[->, orange] [shift={(9, 0)}]  (3, 4) to [bend left] node[midway, left] {$f_1$} (3, 5);
        \draw[->, orange] [shift={(9, 0)}]  (3, 4) to [bend right] node[midway, left] {$g_1$} (3, 3);
        \draw[ ->, blue] [shift={(9, 0)}]  (5, 2)  to [bend left] node[midway, left] {$f_2$} (5, 5);
        \draw[ ->, blue] [shift={(9, 0)}]  (5, 2)  to [bend right] node[midway, left] {$g_2$} (5, 1);
        \node at (9, 1) {(b)};
        \end{tikzpicture}}
        \caption{Two different presentations for the merge tree introduced in \Cref{example-m}. $R_2$ can be matched with either $G_1$ or $G_2$ since after $R_1$ they have already merged into one component.}
        \label{fig:3branch-tree-presentations}
    \end{figure}
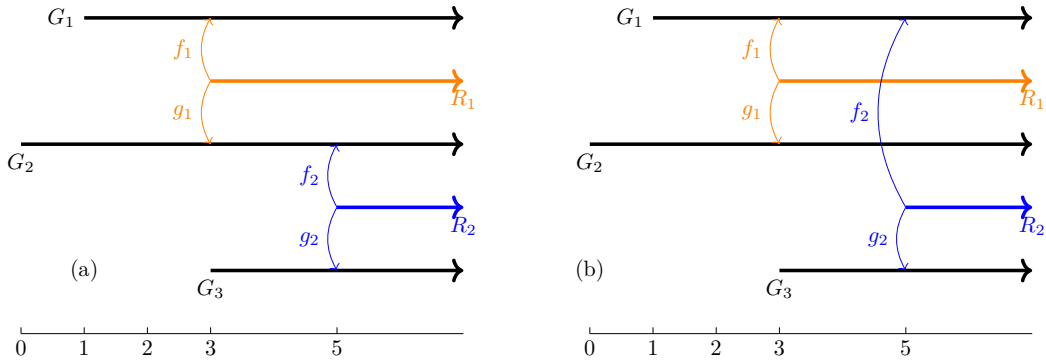
The corresponding presentation matrices for these are
\[
\begin{blockarray}{ccc}
 & 3 & 5 \\
\begin{block}{c(cc)}
  1 & 1 & 0\\
  0 & 1 & 1\\
  3 & 0 & 1\\
\end{block}
\end{blockarray}
\qquad \text{ and }\qquad
\begin{blockarray}{ccc}
 & 3 & 5 \\
\begin{block}{c(cc)}
  1 & 1 & 1\\
  0 & 1 & 0\\
  3 & 0 & 1\\
\end{block}
\end{blockarray} \enskip,
\]
but both of these presentations have the same label vector $L=[1,0,3;3,5]$.
\end{example}

We now present two key lemmas.

\begin{lemma}\label{lem:presentations-exist}
Every merge tree with $n$ leaf nodes has a presentation with $n$ generators and $n-1$ relations.
\end{lemma}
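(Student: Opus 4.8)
The plan is to induct on the number $n$ of leaf nodes, peeling off one leaf at the lowest merge and then re-attaching it via the universal property of coequalizers. I will strengthen the statement slightly to: every merge tree $M$ with $n$ leaves admits a presentation whose generators are exactly the strands $F_{b_\ell}$, one for each leaf $\ell$ (with $b_\ell$ its birth time), together with $n-1$ relations. The base case $n=1$ is handled by the presentation $P'_M$ of \Cref{trivial-tree}, which has a single generator and no relations.

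For the inductive step I first locate a leaf to remove. Let $s_i$ be the smallest critical time at which a merge occurs, i.e.\ the least $i\ge 1$ for which $m_{i-1}=M(s_{i-1}\le s_i)$ fails to be injective; such $i$ exists because $M$ eventually becomes $\{*\}$ while having $n\ge 2$ leaves. Since all of $M(s_0)\to\cdots\to M(s_{i-1})$ are injective, each element of $M(s_{i-1})$ traces down to a unique leaf, and its component below $s_i$ is a single strand. Choosing distinct $a,b\in M(s_{i-1})$ with $m_{i-1}(a)=m_{i-1}(b)$ then yields two distinct leaves $\ell_1,\ell_2$ whose strands merge directly at $s_i$ with no intervening merges. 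Deleting the strand of $\ell_1$ up to $s_i$ produces a merge tree $M'$ with exactly $n-1$ leaves: removing $\ell_1$ either lowers the multiplicity of the merge at $s_i$ or, if only $\ell_1,\ell_2$ merged there, erases that merge while leaving the trunk above $s_i$ unchanged. The leaf $\ell_2$ and all other leaves are untouched, so $M'$ is constructible, finite, and eventually a point.

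Now I apply the inductive hypothesis to $M'$, obtaining a presentation $P_{M'}$ with generators $\{F_{b_\ell}\}_{\ell\ne\ell_1}$ and $n-2$ relations. I form a new diagram by adjoining the generator $G_{\ell_1}:=F_{b_{\ell_1}}$ and one relation $R:=F_{s_i}$ whose merge functions send $R$ into $G_{\ell_1}$ and into the generator $F_{b_{\ell_2}}$. This has $n$ generators and $n-1$ relations, so it remains to check its coequalizer is $M$. Because the old relations map only into $\bigsqcup_{\ell\ne\ell_1}F_{b_\ell}$ and do not meet $G_{\ell_1}$, I compute the coequalizer in two stages: coequalizing the old relations first gives $M'\sqcup G_{\ell_1}$, and it remains to coequalize $R\rightrightarrows M'\sqcup G_{\ell_1}$. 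For $t\ge s_i$ this identifies the single point of $G_{\ell_1}(t)$ with the image $u_t\in M'(t)$ of $F_{b_{\ell_2}}$, namely the trunk component containing $\ell_2$, while for $b_{\ell_1}\le t< s_i$ the point of $G_{\ell_1}(t)$ remains a separate component. The result is precisely $M'$ with a strand born at $b_{\ell_1}$ that merges into the trunk at $s_i$, which is $M$ by construction.

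The main obstacle is this last coequalizer computation: justifying the two-stage evaluation (that the coproduct of relation maps may be coequalized iteratively, since colimits commute with colimits) and verifying that the surviving identification reproduces $M$ on the nose rather than merely a coarser or finer quotient. The combinatorial step, showing the earliest merge is necessarily a merge of two honest leaves, is routine given the injectivity of the maps below $s_i$, but it is exactly what guarantees that deleting $\ell_1$ both decreases the leaf count by one and keeps the trunk intact.
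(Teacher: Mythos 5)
Your proof is correct, but it inverts the paper's induction. The paper works top-down: it takes the \emph{highest} merge time $t$, decomposes the tree into $m$ subtrees of $k_1,\dots,k_m$ leaves lying below $t$, applies (strong) induction to each, and then joins representatives of the $m$ groups with $m-1$ new relations at time $t$, counting $(k_1-1)+\cdots+(k_m-1)+(m-1)=k$ relations. You instead work bottom-up: you locate the \emph{earliest} merge, argue via injectivity of all maps below it that this merge necessarily identifies two honest leaf strands, peel off one of them to get a tree $M'$ with $n-1$ leaves, and re-attach it with one new generator and one new relation. The two approaches trade different burdens. The paper's decomposition is combinatorially cleaner (the relation count falls out of a telescoping sum) but is stated loosely --- it does not verify that the glued diagram really coequalizes to $M$, nor does it make explicit that the subtrees' presentations use leaf strands as generators so that the joining relations can target them. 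You do both of these things: your strengthened induction hypothesis (generators are exactly the leaf strands $F_{b_\ell}$) is exactly what licenses mapping the new relation $F_{s_i}$ into $F_{b_{\ell_2}}$, and your two-stage coequalizer computation, justified by commutation of colimits, is the verification the paper omits. The one step you flag as the ``main obstacle'' is indeed the only nontrivial categorical point, and it is standard; the rest (injectivity below the first merge forcing a leaf-to-leaf merge, and the deleted-strand tree remaining a constructible merge tree with $n-1$ leaves) is routine as you say. In short: a valid, somewhat more careful proof by a genuinely different decomposition.
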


\begin{proof} 
We proceed by induction on the number of leaf nodes:
\Cref{presentation-example} shows that the claim holds for $n=2$. 
Suppose that the claim is true for some $k>2$ and  $M$ is a merge tree of $k+1$ leaf nodes. 
Let $t$ be the highest merge time of $M$, there exists $m$ groups of $k_1, k_2, \ldots, k_m > 0 $ leaf nodes whose merge times are not larger than $t$.
Notice that each $i$-th group of $k_i$ leaf nodes can be realized as a merge tree whose presentation consists of $k_i$ generators and $k_i - 1$ relations.
Let $G_i$ be a representative strand of the $i$-th group, ($ 1 \leq i \leq m$), by pairwise relating $G_1$ and $G_i$ at the time $t$ ($2\leq i \leq m$ ), one obtains a presentation that consists of $k + 1$ generators.
The number of relations is given by $(k_1 - 1 ) + \ldots + (k_m - 1) + (m - 1)$, which is $k$.
\end{proof}

We say that two presentations $P_M$ and $P_N$ are \define{compatible} if their underlying matrices $\overline{P}_M$ and $\overline{P}_N$ are the same.

\begin{lemma} \label{lemma-compatible-matrices-exist}
Any pair of merge trees $M$ and $N$ have compatible presentations.
\end{lemma}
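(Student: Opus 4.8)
The plan is to show that every merge tree admits a presentation whose underlying matrix is a fixed \emph{canonical} matrix depending only on the number of generators; compatibility is then immediate once the two trees are given the same number of generators. Concretely, I will aim for the presentation whose underlying graph---generators as vertices, and each relation as an edge joining the two generators it identifies---is a Hamiltonian path $v_1 - v_2 - \cdots - v_k$. Ordering the generators and relations along this path, the resulting $k \times (k-1)$ matrix is the bidiagonal matrix $\overline{P}$ with $\overline{P}_{ij}=1$ iff $i\in\{j,j+1\}$, which is the \emph{same} for every merge tree with $k$ generators.

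First I would strengthen \Cref{lem:presentations-exist} to produce such a path presentation, inducting on the number of leaves and mirroring its proof. For the inductive step, let $t$ be the highest merge time of $M$; just below $t$ the tree splits into components $C_1,\dots,C_m$, each a merge tree with fewer leaves, so by induction each $C_i$ has a path presentation $P_i$. The key point, which is exactly the flexibility exhibited in \Cref{ex:diff-merge-fns}, is that the $m-1$ relations realizing the merge at time $t$ may attach to \emph{any} generator of the components they join. I will use this freedom to chain the paths $P_1,\dots,P_m$ end to end: connect a free endpoint of $P_i$'s path to a free endpoint of $P_{i+1}$'s path by a relation born at time $t$. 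Each such relation joins two currently distinct components, and the underlying graph is a single Hamiltonian path through all generators of $M$.

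Next I would equalize the number of generators. If $M$ and $N$ have $m \le n$ leaves, I extend $M$'s path presentation by $n-m$ trivial strands, one at a time, using the device of \Cref{trivial-tree,presentation-example}: introduce a new generator $F_a$ with $a$ larger than every critical time of $M$, together with a relation born at $a$ that identifies $F_a$ with the current path endpoint. Since at time $a$ the tree is already a single component, this leaves $M$ unchanged up to isomorphism while lengthening the path by one vertex and one edge. After $n-m$ such steps, both $M$ and $N$ carry path presentations with $n$ generators and $n-1$ relations whose matrices are the \emph{same} bidiagonal matrix $\overline{P}$; hence the presentations are compatible.

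The main obstacle is the first step: arguing that this endpoint-chaining genuinely yields a valid presentation of $M$---that is, re-routing the merge relations to arbitrary generators of the relevant components does not change the coequalizer. I would justify this by checking that the connected components of the graph formed by the relations active at each time agree with the components of $M$ at that time, which is all the coequalizer sees; the specific choice of endpoints affects only the label vector, not the merge-tree isomorphism type. The padding step and the observation that equal-size path matrices coincide are then routine.
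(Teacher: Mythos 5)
Your proof is correct, but it takes a genuinely different route from the paper's. The paper makes no attempt to normalize the two given presentations: it pads generators so the row counts agree and then forms the block matrix $\begin{pmatrix}\overline{P}_M & \overline{P}_N\end{pmatrix}$, labelling the ``foreign'' block of columns in each tree's presentation with a time $t$ after which that tree is a single component, so those extra relations identify already-identified generators and leave the coequalizer unchanged. Your argument instead establishes a \emph{canonical form}: every merge tree with $k$ leaves admits a path presentation whose underlying matrix is the fixed $k\times(k-1)$ bidiagonal matrix, obtained by strengthening the induction in \Cref{lem:presentations-exist} (replacing the star of top-level relations used there by a chain through path endpoints, which is legitimate by the rerouting freedom of \Cref{ex:diff-merge-fns}) and then padding via the device of \Cref{trivial-tree,presentation-example}. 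The paper's construction is shorter and needs no structural control over the input presentations, but it produces a matrix with $m+n$ relation columns; yours yields a reusable normal form with the minimal $\max(k_M,k_N)$ generators and one fewer relations, and reduces compatibility to the observation that equal-size path matrices coincide. Your justification of the key step---that the coequalizer at each time sees only the partition of born generators into connected components of the graph of active relations, so rerouting a merge relation within a component changes nothing---is exactly the right criterion and is no less rigorous than the paper's own treatment of \Cref{lem:presentations-exist}. Two minor points to make explicit if you write this up: a single-generator component serves as its own path endpoint when chaining, and the columns of the bidiagonal matrix are ordered along the path rather than by birth time, which is permitted since the definition of a presentation matrix allows an arbitrary ordering of relations.
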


\begin{proof}
    Given presentations of $M$ and $N$, we may add extra generators and relations to one of them, to obtain presentations  $P_M$ and $P_N$ for $M$ and $N$, respectively, with the same number of generators.  Write the matrices underlying $P_M$ and $P_N$ as $\overline{P}_M$ and $\overline{P}_N$, respectively, and denote their numbers of relations by $m$ and $n$.  Since $M$ and $N$ are merge trees, there exists $t\in \R$ such that $M_{t'}$ and $N_{t'}$ are singleton sets for all $t'\geq t$.  We construct compatible presentations for $\Tilde{P}_M$ and $\Tilde{P}_N$ with underlying matrix
$\begin{pmatrix} \bar P_M& \bar P_N\end{pmatrix}$:
For $\Tilde{P}_M$, we take the row labels and the first $m$ column labels to be the same as for $P_M$, with each of the last $n$ column labels equal to $t$; and for $\Tilde{P}_N$, we take the row labels and the last $n$ column labels be the same as for $P_N$, with each of first $m$ column labels equal to $t$.
Since at time $t$ all the strands of $M$ have been related to each other, $\Tilde{P}_M$ is indeed a presentation for $M$.  Similarly, $\Tilde{P}_M$ is a presentation for $N$.
\end{proof}

\begin{remark}\label{cor-noncompatible-forest}
More generally, two merge forests have compatible presentations if and only if they have the same number of connected components.
\end{remark}
\section{Presentation metric on merge trees}
We next introduce the $p$-presentation metrics merge trees, adapting the definnitions of \cite{LB2021}.
We first define a semi-metric on merge-trees (\Cref{def:label-and-semi-distance}) which measures the difference between merge trees in terms of the $\ell^p$-distance between the birth times of the generators and relations in a compatible presentation of $M$ and $N$.
Unfortunately, as \Cref{ex:CEX} shows, \Cref{def:label-and-semi-distance} fails to satisfy the triangle inequality, so we pass to sequences of merge trees in \Cref{def:p-path-distance} to get a genuine (pseudo)metric.

\begin{definition}[$p$-presentation semi-distance]\label{def:label-and-semi-distance}
If $P_M$ and $P_N$ are compatible presentations for merge trees $M$ and $N$, then for any $p\in[1,\infty]$ we define the \define{$p$-label distance} to be $\norm{L(P_M) - L(P_N)}_p$.
The \define{$p$-presentation semi-distance} between merge trees $M$ and $N$ is
\[
\hat{d}_{\PP}^p(M, N) = \inf\{ \norm{L(P_M) - L(P_N)}_p \mid P_M \text{ and } P_N
\text{ are compatible}\}.
\]
\end{definition}

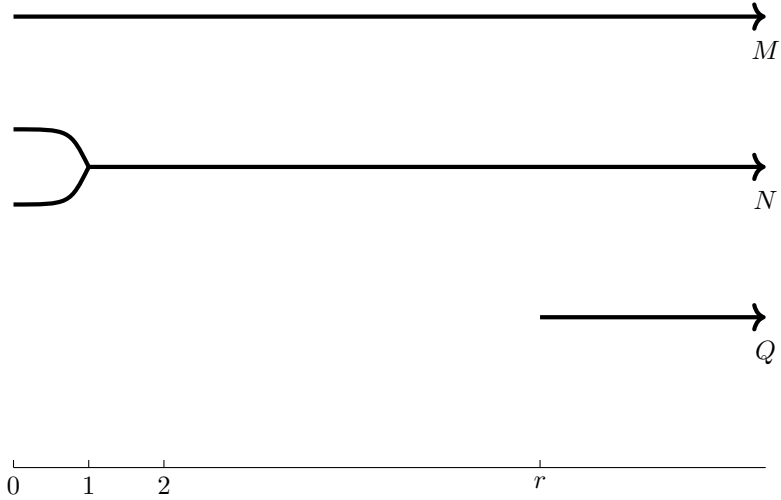
\begin{figure}
    \centering
    \begin{tikzpicture}
        \draw[ultra thick] (0, 3.5) .. controls (0.75, 3.5) .. (1, 3);
        \draw[ultra thick, ->] (0, 2.5) .. controls (0.75, 2.5) .. (1, 3) -- (10, 3)  node[below=5pt] {$N$};
        \draw[ultra thick, ->] (0, 5) -- (10, 5) node[below=5pt] {$M$};
        \draw[ultra thick, ->] (7, 1) -- (10, 1)  node[below=5pt] {$Q$};
        \draw (0, -1) -- (10, -1);
        \foreach \x in {0,1,2}{
        \draw (\x ,-1) node[anchor=north] {\x} -- (\x,-0.9);
        };
    \draw (7 ,-1) node[anchor=north] {$r$} -- (7,-0.9);
    \end{tikzpicture}
     \caption{Counterexample to the triangle inequality for $\hat{d}_{\PP}^p$.}
        \label{fig:CEX}
\end{figure}

\begin{example}\label{ex:CEX}
The following is a close analogue of \cite[Example 3.1]{LB2021}. In \Cref{fig:CEX} we have three merge trees $M,N$ and $Q$.
We claim that for $r$ large enough
\[
\sd^p(M, N) \leq 1, \qquad \sd^p(M, Q) = r, \quad \text{ and }\quad  \sd^p(N, Q) = \sqrt[p]{(r - 1)^p + 2r^p}.
\]
From this it follows that
\[
\sd^p(N, Q) \geq \sqrt[p]{(r - 1)^p + 2r^p} > 1 + r \geq \sd^p(N, M) + \sd^p(M, Q),
\]
and hence the triangle inequality does not hold for $\sd^p$.

Consider the following compatible presentations
\[
P_M :=
\begin{blockarray}{cc}
 & \epsilon \\
\begin{block}{c(c)}
  0 & 1 \\
  \epsilon & 1 \\
\end{block}
\end{blockarray}
\enskip,\qquad
P_N :=
\begin{blockarray}{cc}
 & 1 \\
\begin{block}{c(c)}
  0 & 1 \\
  0 & 1 \\
\end{block}
\end{blockarray}
\enskip,\qquad
P_Q :=
\begin{blockarray}{cc}
 & {r + \epsilon} \\
\begin{block}{c(c)}
  r & 1 \\
  {r + \epsilon} & 1 \\
\end{block}
\end{blockarray}.
\]

It is easy to see that $\sd^p(M, Q) = r$ by choosing compatible presentations with a single generator and no relations.
To see that $\sd^p(M, N) \leq 1$, observe that
\[
\sd^p(M, N) \leq \|L(P_M) - L(P_N)\|_p =
\sqrt[p]{\vert 1 - \epsilon \vert^p + \epsilon^p}.
\]
Setting $\epsilon=0$ for the presentation of $Q$ shows that
$\sd^p(N, Q) \leq \sqrt[p]{(r - 1)^p + 2r^p}$.
Any pair of compatible presentations $P'_N$ and $P'_Q$ for $N$ and $Q$ must contain a subpresentation of the form $P_Q$, since $P_N$ is minimal.
However, for all $\epsilon\geq 0$, the label vector difference $\norm{L(P'_M)-L(P'_N)}_p$ will be greater than $\sqrt[p]{r^p + (r +\epsilon)^p + (r+\epsilon-1)^p}$, which is minimized at $\epsilon=0$.
\end{example}

Although \Cref{ex:CEX} shows that $\sd^p$ is not a metric, we can repair this as follows:

\begin{definition}[$p$-presentation distance]\label{def:p-path-distance}
For merge trees $M$ and $N$ and $p\in[1,\infty]$ we define the \define{$p$-presentation distance} as
\[
d_{\PP}^p(M,N):=\inf \sum_{i=0}^{n-1} \sd^p(Q_i,Q_{i+1}),
\]
where we infimize over all finite sequences of merge trees $M=Q_0,\ldots, Q_n=N$.
\end{definition}

The following result is a close analogue of \cite[Proposition 3.6]{LB2021}.  It has essentially the same proof and will play a similarly important role in our arguments.
\begin{lemma}[Largest bounded metric]\label{lem:sd-universal-d}\mbox{}
\begin{itemize}
    \item[(i)] $d_{\PP}^p$ is a metric on isomorphism classes of merge trees.
    \item[(ii)] $d_{\PP}^p$ is the largest such metric bounded above by $\sd^p$.
\end{itemize}
\end{lemma}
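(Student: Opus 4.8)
The plan is to treat this as the standard metrization of a symmetric premetric: $\sd^p$ is symmetric and vanishes on the diagonal, and $d_{\PP}^p$ is obtained from it by infimizing the total cost over finite chains, which is precisely the construction yielding the largest pseudometric dominated by $\sd^p$. First I would record the preliminaries that make this well posed: $\sd^p$ descends to isomorphism classes (presentations, compatibility, and label vectors are all isomorphism invariants), it is symmetric because compatibility is a symmetric relation and $\norm{L(P_M)-L(P_N)}_p=\norm{L(P_N)-L(P_M)}_p$, and $\sd^p(M,M)=0$ because any presentation is compatible with itself. Everything then reduces to the pseudometric axioms for $d_{\PP}^p$ together with the two claims in part (ii), with a single genuinely nontrivial point, namely separation.

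The pseudometric axioms and part (ii) are routine. Non-negativity is immediate, symmetry follows by reversing a chain, and $d_{\PP}^p(M,M)=0$ from the one-step chain $Q_0=Q_1=M$ of cost $\sd^p(M,M)=0$. The triangle inequality is the point of the construction: concatenating a chain from $M$ to $N$ with a chain from $N$ to $Q$ gives a chain from $M$ to $Q$ whose total cost is the sum, so infimizing yields $d_{\PP}^p(M,Q)\le d_{\PP}^p(M,N)+d_{\PP}^p(N,Q)$. For part (ii), the bound $d_{\PP}^p\le\sd^p$ comes from the one-step chain $Q_0=M,\,Q_1=N$; and for maximality, if $d'$ is any metric with $d'\le\sd^p$, then for every chain $M=Q_0,\dots,Q_n=N$ the triangle inequality for $d'$ gives $d'(M,N)\le\sum_{i}d'(Q_i,Q_{i+1})\le\sum_{i}\sd^p(Q_i,Q_{i+1})$, and infimizing over chains gives $d'\le d_{\PP}^p$. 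None of this uses finiteness of the trees.

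The main obstacle is separation: upgrading the pseudometric to a metric by showing that $d_{\PP}^p(M,N)=0$ forces $M\cong N$. Since a path metric can a priori collapse distinct points, this requires a lower bound by a metric that already separates merge trees; the Wasserstein bound $d_{\WW}^p(\B(M),\B(N))\le d_{\PP}^p(M,N)$ will not suffice, as non-isomorphic merge trees can share a barcode. I would instead use the interleaving distance $d_I$. Because $\norm{\cdot}_\infty\le\norm{\cdot}_p$ entrywise, monotonicity of the infimum gives $\sd^p\ge\sd^\infty$, and I would show $d_I\le\sd^\infty$ by converting any pair of compatible presentations with $\norm{L(P_M)-L(P_N)}_\infty=\delta$ into a $\delta$-interleaving of $M$ and $N$: the shared presentation matrix lets one shift every generator and relation birth time by at most $\delta$, and these shifts induce the two interleaving morphisms. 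Granting this, $d_I$ is a metric bounded above by $\sd^p$, so the maximality just established gives $d_I\le d_{\PP}^p$; since $d_I$ separates isomorphism classes of merge trees, so does $d_{\PP}^p$, which proves part (i). I expect the interleaving construction from compatible presentations to be the delicate part, since one must check that the shifted data genuinely present merge trees interleaved with the originals. Once separation holds, the ``largest metric'' form of part (ii) is immediate, because $d_{\PP}^p$ is then simultaneously a metric and the largest pseudometric bounded above by $\sd^p$.
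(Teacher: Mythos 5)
Your proposal is correct and follows essentially the same route as the paper: the chain construction yields the largest pseudometric bounded above by $\sd^p$ (the paper quotes this as \Cref{bl-lemma}, after noting finiteness via compatible presentations), and separation comes from lower-bounding $d_{\PP}^p$ by the interleaving distance, your derivation of $d_I\le\sd^\infty\le\sd^p$ plus maximality being precisely the content of the paper's proof that $d_{\PP}^\infty=d_I$. The one fact you assert without argument---that $d_I(M,N)=0$ forces $M\cong N$---is likewise outsourced by the paper (\Cref{lem:iso-proof} adapts a cited result for persistence modules), so no genuine gap remains.
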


\begin{proof}
By \Cref{lemma-compatible-matrices-exist}, the presentation distance between any two merge trees is finite. 
It then follows from \Cref{bl-lemma} in the appendix (reproduced from \cite[Rmk.~3.4]{LB2021} for convenience) that $d_{\PP}^p$ is the largest pseudometric bounded above by $\sd^p$. 
Finally, \cref{lem:iso-proof} shows that if $d_{\PP}^p(M,N)=0$, then $M$ is isomorphic to $N$, which finishes the proof.
\end{proof}

To formalize how varying $p$ increases the sensitivity of this distance, we recall the fundamental property of $\ell^p$-norms: for any vector $x$, $\| x \|_p \geq \| x \|_q$ whenever $p\leq q$.

\begin{proposition}\label{cor:comparing-p} For any pair of merge trees $M$ and $N$ and for all $1 \leq p \leq q \leq \infty$, we have $d^p_{\PP}(M, N) \geq d^q_{\PP}(M, N).$
\end{proposition}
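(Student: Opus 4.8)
The plan is to reduce everything to the monotonicity of $\ell^p$-norms recalled just above the statement, namely that $\|x\|_p \geq \|x\|_q$ for every fixed vector $x$ whenever $p \leq q$, and then to push this pointwise inequality through the two successive infima that define $d^p_{\PP}$.

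First I would establish the analogous inequality for the semi-distance, i.e.\ $\sd^p(M,N) \geq \sd^q(M,N)$. The key observation is that $\sd^p$ and $\sd^q$ are defined as infima over the \emph{same} index set, namely the set of pairs of compatible presentations $(P_M, P_N)$ of $M$ and $N$ (\Cref{def:label-and-semi-distance}); the exponent affects only the norm applied to the label-vector difference, not the set of admissible presentations. For each fixed compatible pair, set $x := L(P_M) - L(P_N)$; then $\|x\|_p \geq \|x\|_q$. Since this holds for every such pair over a common index set, the elementary fact that $f \geq g$ pointwise implies $\inf f \geq \inf g$ yields $\sd^p(M,N) \geq \sd^q(M,N)$.

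Next I would lift this to $d^p_{\PP}$. By \Cref{def:p-path-distance}, $d^p_{\PP}(M,N)$ is the infimum of $\sum_{i=0}^{n-1} \sd^p(Q_i, Q_{i+1})$ over all finite sequences $M = Q_0, \dots, Q_n = N$, and again the set of admissible sequences does not depend on the exponent. Fixing any such sequence, the previous step gives $\sd^p(Q_i,Q_{i+1}) \geq \sd^q(Q_i,Q_{i+1})$ for each $i$, so summing over $i$ gives $\sum_{i=0}^{n-1} \sd^p(Q_i,Q_{i+1}) \geq \sum_{i=0}^{n-1} \sd^q(Q_i,Q_{i+1})$. Taking the infimum of both sides over the common set of sequences then yields $d^p_{\PP}(M,N) \geq d^q_{\PP}(M,N)$, which is the claim.

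I do not expect a serious obstacle here: the whole argument is the monotonicity of norms combined with the monotonicity of infima. The only point requiring a little care is to record explicitly that both the semi-distance and the full distance infimize over index sets that are independent of the exponent (compatible presentations, respectively finite connecting sequences), so that the pointwise norm inequality survives each passage to the infimum. This formulation also handles the boundary cases $p=1$ and $q=\infty$ uniformly, since $\|\cdot\|_\infty$ is the max-norm and the monotonicity $\|x\|_p \geq \|x\|_\infty$ still holds.
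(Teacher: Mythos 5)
Your proof is correct and follows exactly the argument the paper intends: the paper omits an explicit proof, relying on the $\ell^p$-norm monotonicity recalled immediately before the statement, and your two-step passage through $\sd^p$ and then the infimum over connecting sequences is the standard way to make that implicit argument precise.
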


\subsection{Equality of the $\infty$-presentation distance and interleaving distance}
We now prove \Cref{Thm:Infty_Pres_Dist_Equals_Interleaving_Distance}, which says that the $\infty$-presentation distance is equal to the interleaving distance.  First, we define the interleaving distance.

\begin{definition}\label{def:interleaving}
For $\epsilon >0$, there is a shift functor $S_\epsilon : \R \to \R$ given by $t \mapsto t + \epsilon$.
An $\epsilon$-\define{interleaving} between persistent sets $M$ and $N$ is given by a pair of natural transformations $\varphi : M \to N S_\epsilon$ and $\psi : N \to M S_\epsilon$ so that
the following diagrams commute for all $s \in \R$,
\[
\begin{tikzcd}
M(s) \ar[r] \ar[rd] & M(s + \epsilon) \ar[r] \ar[rd] & M(s + 2\epsilon) \\
N(s) \ar[r] \ar[ru] & N(s + \epsilon) \ar[r] \ar[ru] & N(s + 2\epsilon).
\end{tikzcd}
\]
The \define{interleaving distance} between two persistent sets $M$ and $N$, and hence two merge trees, is defined as
\[
d_I(M,N) := \inf \{\epsilon \mid M \text{ and } N \text{ are } \epsilon\text{-interleaved}\}.
\]
\end{definition}

\begin{proof}[Proof of \Cref{Thm:Infty_Pres_Dist_Equals_Interleaving_Distance}]
Let $M$ and $N$ be two merge trees. By \Cref{lem:sd-universal-d}(ii) it suffices to prove that $\sd^{\infty}=d_I$ because $d_I$ satisfies the triangle inequality and $d_{\PP}^{\infty}$ is the largest pseudometric bounded above by $\sd^{\infty}$.

Suppose there exists an
$\epsilon$-interleaving between $M$ and $N$ : $\varphi : M \to N  S_\epsilon$ and
$\psi : N \to M S_\epsilon$.  
Let $\overline P_M$ (resp. $\overline P_N$) be the underlying matrix for a presentation of $M$ (resp. $N$), whose generators and relations are $G_M$ and $R_M$ (resp. $G_N$ and $R_N$).
Here we slightly abuse notation by using generators and relations as row and column labels.
We define a matrix $P_Z$ as follows, 
\[
P_Z :=
\begin{blockarray}{ccccc}
 & R_M & R_N & G_M S_\epsilon & G_N S_\epsilon \\
\begin{block}{c(cccc)}
  G_M & \overline P_M & 0 & I & P_\psi  \\
  G_N & 0 & \overline P_N & P_\varphi & I \\
\end{block}
\end{blockarray}
\]
where $P_\psi$ is defined by
\[
(i, j) \mapsto
\begin{cases}
1 & \text{if } \psi(\pi G_i^N)(g_i^N) = g_j^M, \\
0 & \text{otherwise,}
\end{cases}
\]
and  $P_\varphi$ is defined by
\[
(i, j) \mapsto
\begin{cases}
1 & \text{if } \varphi(\pi G_i^M)(g_i^M) = g_j^N, \\
0 & \text{otherwise.}
\end{cases}
\]
Here $g_i^N$ represents the element of $N(\pi G_i^N)$ representing the
strand $G_i^N$. 
Moreover $G_M S_{\epsilon}$ denotes the collection of strands in $G_M$ that are obtained by $\epsilon$-shifting, likewise for $G_N S_{\epsilon}$.
By construction, $P_\psi$ and $P_\varphi$ have exactly one nonzero entry per column.
Using this matrix $P_Z$, one can obtain a pair of compatible presentations $P_M$ and $P_N$ for $M$ and
$N$ respectively by relabeling $P_Z$ as follows:
\[
P_M :=
\begin{blockarray}{ccccc}
 & R_M & R_N S_{\epsilon} & G_M S_{2\epsilon} & G_N S_{\epsilon} \\
\begin{block}{c(cccc)}
  G_M & \overline P_M & 0 & I & P_\psi  \\
  G_N S_{\epsilon} & 0 & \overline P_N & P_\varphi & I \\
\end{block}
\end{blockarray}\enskip,
\]
and
\[
P_N :=
\begin{blockarray}{ccccc}
 & R_M S_{\epsilon} & R_N & G_M S_{\epsilon} & G_N S_{2\epsilon} \\
\begin{block}{c(cccc)}
    G_M S_{\epsilon} & \overline P_M & 0 & I & P_\psi  \\
  G_N & 0 & \overline P_N & P_\varphi & I \\
\end{block}
\end{blockarray}\enskip.
\]
Since each of the generator and relation birth times for $P_M$ and $P_N$ differ exactly by $\epsilon$, we can conclude that  $\| L(P_M) - L(P_N) \|_\infty = \epsilon$.
Hence we have shown that $d_I(M, N) \geq \sd^\infty(M, N)$.

Now suppose that there exists a pair of compatible presentations $P_M$ and $P_N$ for $M$ and $N$
respectively, 
such that $\| L(P_M) - L(P_N) \|_\infty = \epsilon$.  
This hypothesis guarantees generators of $P_M$ are born within $\epsilon$ of generators of $P_N$, and likewise for their relations. This allows us to construct functorial mappings:
\[
\begin{tikzcd}
\bigsqcup R_i^M \ar[r, shift right] \ar[r, shift left] \ar[d, "\alpha"] &
\bigsqcup G_i^M \ar[r] \ar[d, "\beta"]&
M \ar[d, dashed, "\varphi"] \\
\bigsqcup R_i^N S_\epsilon \ar[r, shift right] \ar[r, shift left] &
\bigsqcup G_i^N S_\epsilon \ar[r] &
N S_\epsilon
\end{tikzcd}
\]
where $\alpha$ and $\beta$ are defined by taking $G_i^M \mapsto G_i^N S_{\epsilon}$ and $R_j^M \mapsto R_j^N  S_{\epsilon}$.
Here $\varphi : M \to N S_\epsilon$ is obtained via the universal property of the coequalizer.
Commutativity of the left square in the diagram follows from the fact that $P_M$ and $P_N$ have the same underlying matrix.
Similarly, we obtain a map $\psi : N \to M S_\epsilon$.
The uniqueness of the universal property guarantees $(\varphi, \psi)$ to be an interleaving pair.  
This shows that there exists a $\epsilon$-interleaving between $M$ and $N$, that is, 
$d_I(M, N) \leq \sd^\infty(M, N)$.
\end{proof}

\subsection{Comparison with Wasserstein distance on barcodes}

Now that metric properties of the $p$-presentation distance $d_{\PP}^p$ have been established, we turn our attention to the relation between this distance and the $p$-Wasserstein distance $d_{\WW}^p$ on barcodes.  Specifically, we prove \cref{thm:main-theorem-summary}\,(ii), which says that for $p\in [1, \infty]$ and any merge trees $M$ and $N$, we have \[d_{\WW}^p(\B(M),\B(N)) \leq d_{\PP}^p(M, N).\]

\begin{lemma}\label{lem:H_0-bound}
For merge trees $M$ and $N$,
\[d_{\PP}^p(M, N) \geq d_{\PP}^p\big( H_0(M), H_0(N) \big),\]
where $d_{\PP}^p\big( H_0(M), H_0(N) \big)$ denotes the $\ell^p$-distance on persistent modules, as defined in \cite{LB2021}.
\end{lemma}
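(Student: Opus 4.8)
The plan is to exploit that the persistent homology functor $H_0\colon \Forest \to \Vect^{\R}$ is, by definition, the pointwise free-vector-space functor, and therefore a left adjoint to the pointwise forgetful functor $\Vect^{\R}\to \Set^{\R}$. As a left adjoint, $H_0$ preserves all colimits, and in particular the coequalizers and disjoint unions (coproducts) that define presentations; note that the coequalizers and disjoint unions used here are computed pointwise, hence agree with the corresponding colimits in $\Set^{\R}$, so this preservation applies. Thus, applying $H_0$ to a presentation
\[
\begin{tikzcd}
\bigsqcup_j R_j \ar[r, shift left] \ar[r, shift right] & \bigsqcup_i G_i \ar[r] & M
\end{tikzcd}
\]
of a merge tree $M$ yields a coequalizer diagram
\[
\begin{tikzcd}
\bigoplus_j H_0(R_j) \ar[r, shift left] \ar[r, shift right] & \bigoplus_i H_0(G_i) \ar[r] & H_0(M)
\end{tikzcd}
\]
in $\Vect^{\R}$. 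Since the coequalizer of a parallel pair in $\Vect^{\R}$ is the cokernel of their difference, this exhibits $H_0(M)$ as presented by the free modules $\bigoplus_i H_0(G_i)$, which is exactly a presentation of $H_0(M)$ in the sense of \cite{LB2021}.

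First I would record the effect on labels. Each strand $F_s$ satisfies $H_0(F_s)=k_{[s,\infty)}$, the interval module born at $s$, so each generator (resp.\ relation) of $P_M$ maps to a free summand with the same birth grade. Consequently the induced presentation $H_0(P_M)$ has label vector $L(H_0(P_M))=L(P_M)$. The underlying matrix does change---a column recording a relation $R_j$ attached to generators $G_a,G_b$ becomes a column $\pm(e_a-e_b)$ arising from $H_0(f_j)-H_0(g_j)$---but this does not affect the label vector, which is all that enters the $\ell^p$ computation.

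Next I would verify that compatibility is preserved. Given compatible presentations $P_M,P_N$ with common $0$-$1$ matrix, the corresponding relations attach to the same pairs of generators in both trees, so by orienting each relation's two targets consistently (e.g.\ the smaller-indexed generator contributing $+1$) the signed matrices of $H_0(P_M)$ and $H_0(P_N)$ coincide; hence these are compatible module presentations. Combining with the previous paragraph,
\[
\sd^p\big(H_0(M),H_0(N)\big) \leq \norm{L(H_0(P_M))-L(H_0(P_N))}_p = \norm{L(P_M)-L(P_N)}_p,
\]
and infimizing over all compatible $(P_M,P_N)$ gives $\sd^p(H_0(M),H_0(N))\leq \sd^p(M,N)$ for the semi-distances. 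Lifting to the path distances is then formal: applying $H_0$ term by term to any sequence $M=Q_0,\dots,Q_n=N$ produces a sequence of persistence modules whose consecutive semi-distances only decrease, and since the module path distance infimizes over \emph{all} module sequences---a larger family than those of the form $H_0(Q_0),\dots,H_0(Q_n)$---we conclude $d_{\PP}^p(H_0(M),H_0(N))\leq d_{\PP}^p(M,N)$.

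I expect the main obstacle to be the matrix bookkeeping in the middle step: confirming that compatible merge-tree presentations (same $0$-$1$ matrix) genuinely induce compatible module presentations (same signed matrix), which hinges on choosing the orientations of the relation maps consistently across $M$ and $N$, together with checking that $H_0(F_s)=k_{[s,\infty)}$ is a legitimate graded-free generator of grade $s$ in the framework of \cite{LB2021}. The colimit-preservation that drives the whole argument is conceptually the crux but is essentially formal once $H_0$ is recognized as a left adjoint.
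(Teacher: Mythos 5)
Your proposal is correct and follows essentially the same route as the paper: apply $H_0$ to a pair of compatible presentations to obtain compatible presentations of $H_0(M)$ and $H_0(N)$ with identical label vectors, deduce $\sd^p(H_0(M),H_0(N))\leq \sd^p(M,N)$, and then pass to $d_{\PP}^p$ (the paper does this last step by invoking the ``largest pseudometric bounded by $\sd^p$'' characterization, which is exactly the sequence-chaining argument you unroll by hand). Your extra care about left-adjointness of $H_0$ and the sign conventions in the induced matrix ${f}^{*}-{g}^{*}$ is justification the paper leaves implicit, not a different approach.
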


\begin{proof}
Let $M$ and $N$ be merge trees.
We first show that $\sd^p(M,N)\geq \sd^p(H_0(M),H_0(N))$, where
$\sd^p\big( H_0(M), H_0(N) \big)$ denotes the $p$-presentation semi-distance on persistent modules, as defined in \cite{LB2021}.
If $P_M$, $P_N$ are compatible presentations for $M$ and $N$,
\[
\begin{tikzcd}
\bigsqcup_{i}R^M_{i}  \ar[r, yshift=.7ex, "f^M"] \ar[r, yshift=-.7ex, "g^M"'] &
\bigsqcup_{j} G^M_{j} \ar[r, dashed] & M, \end{tikzcd}
\quad \text{ and  } \quad
\begin{tikzcd}
\bigsqcup_{i}R^N_{i}  \ar[r, yshift=.7ex, "f^N"] \ar[r, yshift=-.7ex, "g^N"'] &
\bigsqcup_{j} G^N_{j} \ar[r, dashed] & N,
\end{tikzcd}
\]
applying $H_0$ will yield compatible presentations for $H_0(M)$ and $H_0(N)$,
\[
\begin{tikzcd}
    \bigoplus_{i} H_0(R^M_{i})  \ar[rr, "{f^M}^{*} - {g^M}^{*}"] & & \bigoplus_{j} H_0(G^M_{j}) \ar[r, dashed] & H_0(M), \\
    \bigoplus_{i} H_0(R^N_{i})  \ar[rr, "{f^N}^{*} - {g^N}^{*}"] & & \bigoplus_{j} H_0(G^N_{j}) \ar[r, dashed] & H_0(N),
\end{tikzcd}\]
whose $p$-distance is $\norm{L(P_M) - L(P_N)}_p$.
This implies that
$\sd^p(M, N) \geq \sd^p\big( H_0(M), H_0(N) \big)$.
This is sufficient to prove the statement because we know that
\sloppy $\sd^p(H_0(M),H_0(N))\geq d_{\PP}^p(H_0(M),H_0(N))$, by \cite[Prop.~3.3]{LB2021}.
Since $d_{\PP}^p(H_0(\bullet),H_0(\bullet))$ is a pseudometric on merge trees,
 \Cref{lem:sd-universal-d}(ii) implies that $d^p_{\PP}(M,N)\geq d_{\PP}^p(H_0(M),H_0(N))$.
\end{proof}

\begin{remark}
The inequality from \Cref{lem:H_0-bound} can be strict because non-isomorphic merge trees can have isomorphic persistent homology modules.
This was demonstrated in \cite{curry2018fiber}, where the following example was considered.
\[
    \begin{tikzpicture}[scale=0.75]
        \draw[->, ultra thick] (0, 0) -- (7, 0) node[anchor=north east, right, below=10pt] {$T_1$};
        \draw[ultra thick] (1, 1.5) -- (3, 1.5) .. controls (3.75, 1.5) .. (4, 0);
        \draw[ultra thick] (2, 0.5) .. controls (2.75, 0.5) .. (3, 0);
        \draw[->, ultra thick] [shift={(9, 0)}] (0, 0) -- (7, 0) node[anchor=north east, right, above=10pt] {$T_2$};
        \draw[ultra thick] [shift={(9, 0)}] (1, 1.5) -- (3, 1.5) .. controls (3.75, 1.5) .. (4, 0);
        \draw[ultra thick] [shift={(9, 1.5)}]  (2, 0.5) .. controls (2.75, 0.5) .. (3, 0);
        \draw[thin] (0, -2) -- (7, -2);
        \foreach \x in {0,1,2, 3, 4, 5}{
        \draw (\x ,-2) node[anchor=north] {\x} -- (\x,-1.9);
        };
        \draw[thin][shift={(9, 0)}]  (0, -2) -- (7, -2);
        \foreach \x in {0,1,2, 3, 4, 5}{
        \draw [shift={(9, 0)}]  (\x ,-2) node[anchor=north] {\x} -- (\x,-1.9);
        };
     \end{tikzpicture}
\]

\end{remark}

Continuing our comparison of the $p$-presentation distance on merge trees with metrics in persistent homology, we recall the definition of the Wasserstein distance on barcodes.

\begin{definition}\label{def:wasserstein}
A \define{barcode} $\mathcal B$ is a finite collection of intervals $\{I\}$ in $\R$.
A matching between barcodes $\mathcal B$ and $\mathcal C$ consists of a choice of subsets
 $\mathcal{B}' \subset \mathcal{B}$,and $\mathcal{C}' \subset \mathcal{C}$ and a bijection
 $\sigma: \mathcal{B'} \to \mathcal{C'}$.
For any $p \in [1,\infty]$ we define the \define{$p$-cost} of $\sigma$ as
\[
\cost{\sigma, p} = \left\{
\begin{aligned}
        \left( \sum\limits_{\substack{ I \in \mathcal B, \\ \sigma(I) = J }}{\norm{I - J}_p^p + }  \sum\limits_{I \in \Delta}{\norm{I - \Mid(I)}}_p^p \right)^{1/p}, & \text{ when } 1 \leq p < \infty \\
        \max\left\{ \max\limits_{\substack{ I \in \mathcal B, \\ \sigma(I) = J }}{\norm{I - J}_\infty, \max\limits_{I \in \Delta }{\norm{I - \Mid(I)}}_\infty } \right\}, & \text{ when } p = \infty,
\end{aligned}
\right \},
\]
where $\norm{I-J}_p$ is the $\ell^p$-norm between intervals $I = [a, b)$ and $J=[c,d)$ viewed as vectors $(a, b)$ and $(c,d)$ in $\R^2$, $\Mid(I) := \left[\tfrac{a+b}{2}, \tfrac{a+b}{2}\right)$ is the empty interval at the midpoint of $I$, and $\Delta$ denotes all the intervals unmatched by $\sigma$ in $\mathcal B \sqcup \mathcal C$.
The \define{Wasserstein $p$-distance} between barcodes $\mathcal B$ and $\mathcal C$ is then defined as the infimum of $p$-costs over all possible matchings, i.e.,
\[
d^p_{\WW}(\mathcal B, \mathcal C) = \inf\limits_{\sigma}{\cost{\sigma, p}}.
\]
The distance $d^\infty_\WW$ is called the \define{bottleneck distance}.
\end{definition}

\Cref{lem:H_0-bound} and \cite[Theorem 1.1]{LB2021} together establish \cref{thm:main-theorem-summary}\,(ii).

\section{Stability and universality}

In this section we consider two of the most important properties of the $p$-presentation metric on merge trees: stability and universality.
To motivate stability, we consider another matrix-based distance on merge trees known as the
$p$-cophenetic distance, which we show is \emph{not} stable for $p\in [1,\infty)$.

\paragraph*{Comparison with cophenetic distances}
\begin{definition}\label{def:cophenetic}
Given a merge tree $M$ together with a surjective ordered leaf node labelling $\pi= \{G_1, \ldots, G_k\}$, a \define{cophenetic vector} $C^{\pi}_M$ is an upper triangular matrix whose $(i, j)$-entry is the earliest merge 
time (height of the least common ancestor) of nodes $G_i$ and $G_j$.
For a pair of merge trees $M$ and $N$ together with a labelling $\pi$, the \define{labeled $p$-cophenetic distance} is defined as $\|C^\pi_M - C^\pi_N\|_p$, where we view these matrices as length $k(k+1)/2$-vectors.
One can then define the \define{$p$-cophenetic distance} between two merge trees as
\[
d^p_C(M, N) = \inf\limits_{\pi \in \Pi}{\|C_M^\pi - C_N^\pi\|_p},
\]
where $\Pi$ denotes the set of all surjective, ordered leaf labelings of $M$ and $N$.
\end{definition}

We now give a counterexample to stability of the cophenetic $p$-distance, when $p\neq \infty$, for cellular monotone functions; see \cpageref{def:cellular-monotone}, \Cref{def:cellular-monotone} for a reminder.

\begin{example} \label{ex:instability-cophenetic}
Let $X$ be the barycentric subdivision of the geometric $1$-simplex.
Consider the monotone cellular functions $f, g: X \to \R$ where $f \equiv 0$ and $g$ is $0$ on $0$-cells and $1$ on $1$-cells.
By inspection, the merge tree $M = \pi_0 S^{\uparrow}f$ has one leaf node and $N = \pi_0 S^{\uparrow}g$ has three leaf nodes, one for each $0$-cell, and a single internal node; see \Cref{fig:cophenetics}.

    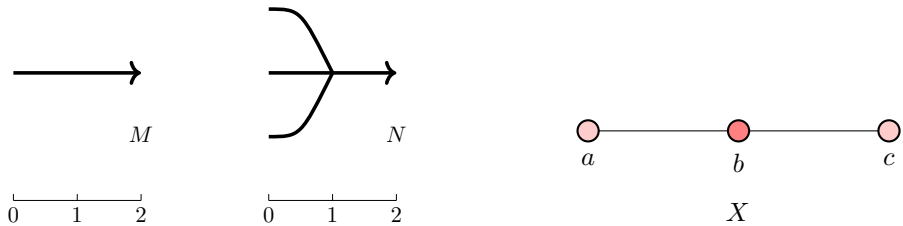
\begin{figure}[ht]
    \centering
\begin{minipage}[b]{0.4\textwidth}
    \centering
    \resizebox{\textwidth}{!}{%
        \begin{tikzpicture}
    \draw [->, ultra thick] (0, 4) -- (2, 4) node[anchor=north, right, below=20pt] {$M$} ;
    \foreach \x in {0,1,2} {
        \draw (\x ,2) node[anchor=north] {\x} -- (\x,2.1);
    };
    \draw (0, 2) -- (2, 2);
    \draw [->, ultra thick] [shift={(4, 0)}] (0, 4) -- (2, 4) node[anchor=north, right, below=20pt] {$N$};
    \draw [ultra thick] [shift={(4, 0)}] (0, 5) .. controls (0.5, 5) .. (1, 4);
    \draw [ultra thick] [shift={(4, 0)}] (0, 3) .. controls (0.5, 3) .. (1, 4);
    \foreach \x in {0,1,2} {
        \draw [shift={(4, 0)}]  (\x ,2) node[anchor=north] {\x} -- (\x,2.1);
    };
    \draw [shift={(4, 0)}] (0, 2) -- (2, 2);
    \end{tikzpicture}

    }
    \end{minipage}\qquad \qquad
    \begin{minipage}[b]{0.4\textwidth}
    \centering
    \begin{tikzpicture}
  \colorlet{circle edge}{black}
  \coordinate (A) at (-2,4);
  \coordinate (B) at (0, 4);
  \coordinate (C) at (2, 4);
  \coordinate (D) at ($(A)!.5!(B)$);
  \coordinate (E) at ($(B)!.5!(C)$);
  \draw (A) node[below=5pt] {$a$} -- (B) node[below=5pt] {$b$} -- (C) node[below=5pt] {$c$};
  \fill[filled] (A) circle (4pt);
  \fill[outline, fill=red!50] (B) circle (4pt);
  \fill[filled] (C) circle (4pt);
  \node[anchor=north, below =5pt] at (current bounding box.south) {$X$};
\end{tikzpicture}
\end{minipage}
\caption{Two merge trees are shown at left, associated to two functions on the cell complex $X$, at right. In \Cref{ex:instability-cophenetic} these give a counterexample to stability of the $p$-cophenetic distance.}
\label{fig:cophenetics}
\end{figure}

By allowing redundant labels for the one leaf node in $M$, we have cophenetic vectors
\[
C_M =
\begin{bmatrix}
0 & 0 & 0 \\
\ast & 0 & 0 \\
\ast & \ast & 0
\end{bmatrix}
\quad \text{ and }\qquad
C_N=
\begin{bmatrix}
0 & 1 & 1 \\
\ast & 0 & 1 \\
\ast & \ast & 0
\end{bmatrix}.
\]
From \cite{intrinsic_interleaving} we know that $d^\infty_C(M_f, M_g) = d_I(M_f, M_g)$, which is stable \cite{interleaving-distance-merge-trees}, i.e.,
\[
d^\infty_C(M_f, M_g)=d_I(M_f, M_g) \leq \|f -g\|_\infty.
\]
However, for $p\in [1, \infty)$, we have $d^p_C(M_f, N_g) = \sqrt[p]{3}$, which is larger than $\|f - g\|_p=\sqrt[p]{2}$.
\end{example}

\subsection{Stability}

\begin{definition}
A distance $d$ on merge trees is said to be \define{$p$-stable} if whenever
$f$ and $g$ are monotone cellular functions on a regular cell complex, the
associated merge trees $M= \pi_0 S^{\uparrow} f$ and
$N = \pi_0 S^{\uparrow} g$ satisfy $d(M,N)\leq \|f- g\|_p$.
\end{definition}

We show that $p$-presentation distance is $p$-stable.
\begin{proof}[Proof of \Cref{thm:main-theorem-summary}(i)]
We start by labeling vertices of $X$ by $\sigma_1, \ldots, \sigma_k$ and edges
by $\tau_1, \ldots, \tau_l$.
Consider the $(k\times l)$-matrix $P$ where
\[
P(i, j) :=
\begin{cases}
1 & \text{if } \sigma_i \subseteq \tau_j, \\
0 & \text{otherwise}.
\end{cases}
\]
We then define labeled matrices $P_M$, $P_N$ with underlying matrix $P$,
\[
P_M :=
\begin{blockarray}{ccccc}
 & \rho^M_1 & \cdots & \rho^M_l\\
\begin{block}{c(cccc)}
  \gamma^M_1 & & & \\
  \vdots & & & \\
  \gamma^M_k & & & \\
\end{block}
\end{blockarray}\enskip
\quad \text{and}
\quad
P_N :=
\begin{blockarray}{ccccc}
 & \rho^N_1 & \cdots & \rho^N_l \\
\begin{block}{c(cccc)}
    \gamma^N_1 & & & \\
  \vdots & & & \\
  \gamma^N_k & & & \\
\end{block}
\end{blockarray}\enskip,
\]
where $\gamma^M_i = f(\sigma_i)$ and $\rho^M_j = f(\tau_j)$, likewise for $P_N$.  
By definition of a regular cell complex each column of $P_M$ and $P_N$
must have exactly two ones.
Moreover, $P_M$ and $P_N$ are compatible presentation matrices of their
respective merge trees:
Monotonicity guarantees that we can obtain presentations for merge trees
$M$, $N$ with the generators and relations as described in $P_M$ and $P_N$.
By construction we have that
$\| L(P_M) - L(P_N) \|_p \leq \|f - g\|_p$ and by definition of
$\sd^p$ it follows that
$d_{\PP}^p(M, N) \leq \sd^p(M, N)  \leq \|f - g\|_p.$
\end{proof}

\begin{example}
If we present the merge trees $M$ and $N$ from \Cref{ex:instability-cophenetic}
using three generators and two relations, the corresponding presentation matrices are
\[ P_{M}:
\begin{blockarray}{ccc}
                & 0 & 0 \\
            \begin{block}{c(cc)}
            0 & 1 & 0 \\
            0 & 1 & 1 \\
            0 & 0 & 1 \\
            \end{block}
\end{blockarray} \qquad \text{ and } \qquad
P_{N}:
\begin{blockarray}{ccc}
                & 1 & 1 \\
            \begin{block}{c(cc)}
            0 & 1 & 0 \\
            0 & 1 & 1 \\
            0 & 0 & 1 \\
            \end{block}
\end{blockarray}
\]
and the label vectors are $L(P_M):=[0,0,0;0,0]$ and $L(P_M):=[0,0,0;1,1]$.
Thus the $p$-label distance is $\sqrt[p]{2}$, which equals the $\ell^p$-distance between $f$ and $g$. Consequently,
$\sd^p(M_f, N_g) \leq \sqrt[p]{2}  = \|f-g\|_p.$ Hence, $d_{\PP}^p(M_f, N_g) \leq \|f-g\|_p $, thus illustrating one advantage of this metric over the cophenetic $p$-distance.
\end{example}

\subsection{Universality}

In this section we prove that the $p$-presentation metric is universal among $p$-stable metrics. Here "universal" can be interpreted as maximal or final in a certain poset category of pseudometrics.
In order to prove \Cref{Thm:Universality}, we need the following lemma:

\begin{lemma}[Geometric lifting]
\label{lem-geometric-lifting}
If $M$ and $N$ are merge trees with compatible presentations $P_M$ and $P_N$ 
such that $\| L(P_M) - L(P_N) \|_p = \epsilon$, 
then there exists a regular cell complex $X$ and monotone cellular functions
$f : X \to \R$ and $g : X \to \R$ such that
\begin{enumerate}
    \item[(i)] $M \cong \pi_0 S^{\uparrow}f$, $N \cong \pi_0 S^{\uparrow}g$, and
    \item[(ii)] $\|f - g \|_p = \epsilon$.
\end{enumerate}
\end{lemma}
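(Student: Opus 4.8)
The plan is to realize the two merge trees as sublevel set merge trees of monotone cellular functions on a \emph{common} cell complex, built directly from the shared underlying presentation matrix $P := \overline{P}_M = \overline{P}_N$. The key observation is that the stability proof for \Cref{thm:main-theorem-summary}(i) runs in the opposite direction: there, a regular cell complex with a monotone cellular function yields a compatible presentation whose label vector records cell values, and the induced label distance is bounded by $\|f-g\|_p$. Here I want to invert that construction so that the label distance is realized \emph{exactly} as $\|f-g\|_p$. So first I would use $P$ to define $X$: since $P$ is a $(k \times l)$ matrix in which each column has exactly the two nonzero entries that record which generators a relation merges, I would let $X$ have one vertex $\sigma_i$ for each of the $k$ generators (rows) and one edge $\tau_j$ for each of the $l$ relations (columns), with $\tau_j$ attached to the two vertices indexed by the nonzero entries of column $j$. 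This produces a graph, which is a regular cell complex; one should check that this attaching data is consistent (each column really does have exactly two ones, which must be arranged in the lemmas producing compatible presentations).

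Next I would define the functions by reading off the label vectors: set $f(\sigma_i) = \gamma_i^M$ and $f(\tau_j) = \rho_j^M$, where $L(P_M) = [\gamma_1^M,\dots,\gamma_k^M; \rho_1^M,\dots,\rho_l^M]$, and analogously $g(\sigma_i) = \gamma_i^N$, $g(\tau_j) = \rho_j^N$. By construction $\|f-g\|_p$ is exactly the $\ell^p$-norm of the full label-vector difference, which is $\epsilon$, giving (ii) immediately once monotonicity is arranged. For monotonicity I need each edge's value to dominate the values of its two endpoints, i.e.\ $\rho_j \geq \gamma_i$ whenever $\sigma_i \subseteq \tau_j$; this is exactly the condition that a relation is born no earlier than the generators it merges, which holds for genuine presentations (a relating strand must be born at or after the merge event it witnesses). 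I would verify that the compatible presentations supplied by \Cref{lemma-compatible-matrices-exist}, possibly after a normalization, satisfy this inequality for \emph{both} $M$ and $N$ simultaneously, since the two functions share the same complex $X$.

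Finally I would establish (i), namely $M \cong \pi_0 S^{\uparrow} f$ and $N \cong \pi_0 S^{\uparrow} g$. The point is that the sublevel set filtration of a monotone cellular $f$ on this graph, passed through $\pi_0$, yields precisely the coequalizer presentation encoded by $P_M$: at each threshold $t$, the connected components of $f^{-1}(-\infty,t]$ are the vertices born by time $t$ modulo the edges appearing by time $t$, which is exactly the coequalizer of the relation strands into the generator strands described by $P_M$. This is the same correspondence invoked in the proof of \Cref{thm:main-theorem-summary}(i), now read as an isomorphism rather than an inequality, so I would cite that construction and the relation $B(M) = B(H_0(S^{\uparrow}f))$ type identifications already recorded in the excerpt.

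I expect the main obstacle to be the simultaneous monotonicity requirement together with the exactness of (ii): a generic pair of compatible presentations need not have label values arranged so that \emph{both} $f$ and $g$ are monotone on the same $X$, and naively correcting one function could inflate $\|f-g\|_p$ beyond $\epsilon$. The delicate step is therefore to argue that compatible presentations can always be chosen (or modified, via the trivial generator/relation moves of \Cref{presentation-example} and the construction in \Cref{lemma-compatible-matrices-exist}) so that relation birth times dominate the corresponding generator birth times in both $M$ and $N$ without changing any label-vector entry, hence preserving $\|L(P_M)-L(P_N)\|_p = \epsilon$ exactly. I would handle this by observing that in a valid presentation the merge-function constraints already force $\rho_j \geq \max_{\sigma_i \subseteq \tau_j} \gamma_i$ for each tree individually, so monotonicity is automatic rather than something to be imposed, and the only real work is confirming that the common matrix $P$ can be taken with the correct column structure for a graph.
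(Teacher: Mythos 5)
Your proposal is essentially the paper's own proof: the paper likewise builds $X$ from the shared underlying matrix (one $0$-cell per row, one $1$-cell per column attached along the nonzero entries) and defines $f,g$ by reading off the label vectors, so that $\|f-g\|_p = \|L(P_M)-L(P_N)\|_p = \epsilon$. In fact you go further than the paper's rather terse argument by explicitly addressing monotonicity and the verification of (i), both of which the paper leaves implicit.
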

\begin{proof}
Let $P_M$ and $P_N$ be compatible presentations for $M$ and $N$ with presentation matrices 
\[
P_M :=
\begin{blockarray}{ccccc}
 & \rho_1^M & \cdots & \rho_l^M \\
\begin{block}{c(cccc)}
  \gamma_1^M & & & \\
  \vdots & & & \\
  \gamma_k^M & & & \\
\end{block}
\end{blockarray}\qquad \text{ and } \qquad
P_N :=
\begin{blockarray}{ccccc}
 & \rho_1^N & \cdots & \rho_l^N \\
\begin{block}{c(cccc)}
  \gamma_1^N & & & \\
  \vdots & & & \\
  \gamma_k^N & & & \\
\end{block}
\end{blockarray} \,.
\]
Using the underlying matrix for either presentation we construct a cell complex
$X$ as follows: For each row $i$, add a $0$-cell $\sigma_i$ to $X$, and for
each column $j$, attach a $1$-cell $\tau_j$ so that
$P_M(i, j) = 1\Rightarrow \sigma_i \subseteq \tau_j$.  
We then construct cellular functions $f, g : X \to \R$ using the birth times
for the generators and relations for $P_M$ and $P_N$, respectively.  
More precisely, $f(\sigma_i) = \gamma^M_i$ and $f(\tau_j) = \rho^M_j$, likewise for $g$.  
Since $\| L(P_M) - L(P_N)\|_p = \epsilon$ we have that $\|f - g \|_p = \epsilon$ as well.
\end{proof}

\begin{proof}[Proof of \Cref{Thm:Universality}]
It suffices to prove that for any $p$-stable distance $d$, we have $d\leq \sd^p$.  
Let $M, N$ be merge trees with $\sd^p(M, N) = \epsilon$ and let
$\epsilon' > \epsilon$ be arbitrary.
By \Cref{lem-geometric-lifting}, we can find
$f, g : X \to \R$
such that $M \cong \pi_0 S^{\uparrow} f$,
$N \cong \pi_0 S^{\uparrow} g$, and
$\|f - g\|_p = \epsilon'$.
By assumption $d(M, N) \leq \|f - g\|_p = \epsilon'$ and by letting
$\epsilon' \to \epsilon$, we have that $d(M, N) \leq \sd^p(M, N)$. 
By \Cref{lem:sd-universal-d}, we know that $d\leq d_{\PP}^p$.
\end{proof}


\bibliography{Presentation-main}

\appendix

\section{Supplement to proof of \Cref{lem:sd-universal-d}}
\begin{lemma}[{\cite[Remark 3.4]{LB2021}}] \label{bl-lemma}
Let $\hat d : X \times X \to [0, \infty)$ be a symmetric and reflexive function.
Define $d : X \times X \to [0, \infty)$ by 
\[d(x, y) := \inf \sum_{i = 0}^{n - 1} \hat d(z_i, z_{i + 1}),\]
where the infimum is taken over all finite sequences $z_0, \ldots, z_n$ in $X$ 
with $z_0 = x$ and $z_n = y$.
Then $d$ is the largest pseudometric bounded above by $\hat d$.
\end{lemma}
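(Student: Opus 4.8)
The plan is to verify directly that $d$ satisfies the three pseudometric axioms, that it is bounded above by $\hat d$, and that it dominates every other pseudometric with this boundedness property; this is the standard ``path (pseudo)metric'' associated to the symmetric reflexive gauge $\hat d$, so each verification is short once the right choice of path is made. First I would record that $d$ is well defined as a function into $[0,\infty)$: each path cost is a sum of nonnegative numbers, so $d(x,y)\ge 0$, while the single-edge path $z_0=x$, $z_1=y$ (i.e.\ $n=1$) already gives $d(x,y)\le \hat d(x,y)<\infty$. This last inequality is exactly the bound $d\le \hat d$, so that property comes for free from the definition.

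For the pseudometric axioms I would argue as follows. Reflexivity: the constant path $z_0=z_1=x$ gives $d(x,x)\le \hat d(x,x)=0$, and combined with $d\ge 0$ this yields $d(x,x)=0$. Symmetry: since $\hat d$ is symmetric, reversing a sequence $z_0,\dots,z_n$ from $x$ to $y$ produces a sequence from $y$ to $x$ of identical total cost, so the two infima defining $d(x,y)$ and $d(y,x)$ range over the same set of real numbers and hence coincide. Triangle inequality: given any sequence from $x$ to $y$ and any sequence from $y$ to $z$, their concatenation is a valid sequence from $x$ to $z$ whose cost is the sum of the two costs; fixing $\e>0$, choosing each sequence to realize its infimum up to $\e$, concatenating, and then letting $\e\to 0$ gives $d(x,z)\le d(x,y)+d(y,z)$.

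Finally, for maximality I would take an arbitrary pseudometric $d'$ with $d'\le \hat d$ and show $d'\le d$. For any sequence $x=z_0,\dots,z_n=y$, repeated application of the triangle inequality for $d'$ followed by the hypothesis $d'\le \hat d$ gives
\[
d'(x,y)\le \sum_{i=0}^{n-1} d'(z_i,z_{i+1})\le \sum_{i=0}^{n-1}\hat d(z_i,z_{i+1}).
\]
Taking the infimum over all such sequences yields $d'(x,y)\le d(x,y)$, so $d$ is indeed the largest pseudometric bounded above by $\hat d$.

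I do not expect a genuine obstacle, since this is a classical construction; the only points requiring care are the conventions on degenerate paths used for reflexivity, the reversal step for symmetry (which is precisely where the symmetry hypothesis on $\hat d$ enters), and the routine $\e$-bookkeeping needed to push the concatenation argument through both infima in the triangle inequality.
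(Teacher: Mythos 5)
Your proposal is correct and follows essentially the same route as the paper's own proof: reflexivity via the trivial path, symmetry by reversing sequences, the triangle inequality by concatenating near-optimal paths, and maximality by applying the triangle inequality of a competing pseudometric along an arbitrary path and then infimizing. The only cosmetic difference is that you make the bound $d \le \hat d$ (via the single-edge path) explicit, which the paper leaves implicit.
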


\noindent 
We reproduce the proof of 
\cite[Proposition 3.3]{LB2021}.

\begin{proof}
For $x \in X$, it follows that $d(x, x) \leq \hat d(x, x) = 0$.
For $x, y \in X$, observe that
\[d(x, y) = \inf \sum_{i = 0}^{n - 1} \hat d(z_i, z_{i + 1})
= \inf \sum_{i = 0}^{n - 1} \hat d(z_{i + 1}, z_{i}) = d(y, x)\]
since $\hat d$ is symmetric by hypothesis.
Let $x, y, w \in X$, and suppose $\delta > d(x, w)$ and $\delta' > d(w, y)$.
By definition there exist $x = z_0, \ldots, z_n = w$ and 
$w = z_0', \ldots, z_m' = y$ such that 
\[\sum_{i = 0}^{n - 1} \hat d(z_i, z_{i + 1}) < \delta
\quad \text{ and } \quad 
\sum_{i = 0}^{m - 1} \hat d(z_i', z_{i + 1}') < \delta'. \]
By definition of $d$, we see that 
\[
d(x, y) \leq 
\inf \sum_{i = 0}^{n - 1} \hat d(z_i, z_{i + 1}) + 
\inf \sum_{i = 0}^{m - 1} \hat d(z_i', z_{i + 1}') < \delta + \delta'.
\]
By arbitrariness of $\delta, \delta'$, conclude that 
$d(x, y) \leq d(x, w) + d(w, y)$.

Let $d'$ be a pseudometric bounded above by $\hat d$. Let $x, y \in X$ and 
suppose $\delta > d(x, y)$. By definition, there exist 
$x = z_0, \ldots, z_n = y$ such that 
\[
\sum_{i = 0}^{n - 1} \hat d(z_i, z_{i + 1}) < \delta.
\]
Now observe that 
\[
d'(x, y) \leq 
\sum_{i = 0}^{n - 1} d'(z_i, z_{i + 1}) \leq 
\sum_{i = 0}^{n - 1} \hat d(z_i, z_{i + 1}) <
\delta.
\]
By arbitrariness of $\delta > d(x, y)$, conclude that $d'(x, y) \leq d(x, y)$.
\end{proof}

\begin{lemma}[cf. {\cite[Proposition 3.6]{LB2021}}]\label{lem:iso-proof}
If $M$ and $N$ are merge trees with $d_{\PP}^p(M,N)=0$, then $M$ and $N$ are isomorphic.
\end{lemma}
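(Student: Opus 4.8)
The plan is to reduce the statement to the non-degeneracy of the interleaving distance and then exploit constructibility. First I would observe that the hypothesis propagates to the $\infty$-level: by \Cref{cor:comparing-p} we have $d_{\PP}^p(M,N)\ge d_{\PP}^\infty(M,N)$, and by \Cref{Thm:Infty_Pres_Dist_Equals_Interleaving_Distance} the right-hand side equals $d_I(M,N)$, so $d_{\PP}^p(M,N)=0$ forces $d_I(M,N)=0$. It is tempting to argue instead from the barcode, since \Cref{thm:main-theorem-summary}(ii) gives $d^p_{\WW}(\B(M),\B(N))=0$ and hence $\B(M)=\B(N)$; but the barcode is not a complete invariant of merge trees (cf.\ the trees $T_1,T_2$), so this is insufficient and the reduction must go through $d_I$. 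It therefore suffices to prove that $d_I$ is non-degenerate on constructible merge trees: $d_I(M,N)=0\Rightarrow M\cong N$.

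For the core argument I would use that the critical times are discrete. Since $d_I(M,N)=0$, for every $\epsilon>0$ there is an $\epsilon$-interleaving $\varphi\colon M\to NS_\epsilon$, $\psi\colon N\to MS_\epsilon$. Let $\delta>0$ be smaller than the minimal gap between consecutive elements of $\tau_M\cup\tau_N$, and fix $\epsilon<\delta/2$. The interleaving triangles of \Cref{def:interleaving} say that the round-trip composites $\psi(s+\epsilon)\circ\varphi(s)$ and $\varphi(s+\epsilon)\circ\psi(s)$ are the internal structure maps $M(s\le s+2\epsilon)$ and $N(s\le s+2\epsilon)$. Whenever the closed interval $[s,s+2\epsilon]$ contains no critical time these structure maps are isomorphisms, so $\varphi(s)$ is a split monomorphism and, by the symmetric computation, a split epimorphism; hence $\varphi(s)$ is a bijection for all $s$ lying outside an $O(\epsilon)$-neighborhood of $\tau_M\cup\tau_N$, and likewise for $\psi$.

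The remaining, and hardest, step is to upgrade these levelwise bijections into a genuine natural isomorphism $M\cong N$, i.e.\ a $0$-interleaving. Here I would take a sequence $\epsilon_k\downarrow 0$ with $\epsilon_k$-interleavings $(\varphi_k,\psi_k)$ and record the finitely many functions that each $\varphi_k$ induces between the (finitely many, finite) critical sets of $M$ and of $N$ through the nearby constancy intervals; since there are only finitely many such combinatorial types, a pigeonhole argument extracts a subsequence along which this data is constant, and as $\epsilon_k\to0$ the interleaving equations degenerate into the equations of a $0$-interleaving, yielding mutually inverse natural isomorphisms. The main obstacle is precisely this assembly at the critical times: cardinality alone does not detect a critical time at which a merge is offset by a birth, so the argument cannot be run on the cardinality functions $t\mapsto|M(t)|$ and must instead track the actual merge maps $m_i$ and their compatibility under $\varphi,\psi$; verifying that $\tau_M=\tau_N$ and that the structure maps correspond is where the real work lies.
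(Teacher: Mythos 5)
Your proposal is correct and takes essentially the same route as the paper: reduce to $p=\infty$ via \Cref{cor:comparing-p}, identify $d_{\PP}^\infty$ with $d_I$ via \Cref{Thm:Infty_Pres_Dist_Equals_Interleaving_Distance}, and conclude from non-degeneracy of the interleaving distance on constructible objects. The only difference is that the paper discharges this last step by citing \cite[Corollary~6.2]{lesnick2015theory} (noting that ``essentially the same argument'' applies to merge trees), whereas you sketch that argument directly --- your discreteness-of-critical-times and pigeonhole outline is precisely the standard proof being cited, and your observation that the barcode route cannot work is also correct.
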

\begin{proof}
By \cref{cor:comparing-p}, $d_{\PP}^\infty(M,N) \leq d_{\PP}^p(M,N)$, so it suffices to prove the result in the case that $p=\infty$.  By \cref{Thm:Infty_Pres_Dist_Equals_Interleaving_Distance}, we have $d_{\PP}^\infty(M,N)=d_I(M,N)$.  It is shown in \cite[Corollary 6.2]{lesnick2015theory} that if a pair of finitely presented multiparameter persistence modules have interleaving distance 0, then they are isomorphic.  In fact, essentially the same argument shows that if $d_I(M,N)=0$, then $M$ and $N$ are isomorphic.  
\end{proof}

\end{document}